\newtheorem{theorem}{Theorem}[section]
\newtheorem{algorithm}[theorem]{Algorithm}
\newtheorem{proof}[theorem]{Proof}
\newtheorem{thm}{Theorem}[section]
\newtheorem{prop}[thm]{Proposition}
\begin{document}

\title{\LARGE\sf Synchronization of cyclic power grids: equilibria and stability of the synchronous state
\footnote{Submitted to Chaos in 29 July 2016.}}
\author
{
{\large\sf Kaihua Xi\footnote{Email:~K.Xi@tudelft.nl},  Johan.L.A. Dubbeldam\footnote{Email:~j.l.a.dubbeldam@tudelft.nl}, 
Haixiang Lin\footnote{Email:~H.X.Lin@tudelft.nl}}\\ 
{\small\sf Delft Institute of Applied Mathematics, Delft University of Technology, }\\
{\small\sf 2628 CD, Delft, The Netherlands}\\
%{\small\sf Email EMAILADDRESS}\\ [0.2cm]
%{\large\sf Jan H. van Schuppen}\\ 
%{\small\sf CWI}\\
%{\small\sf P.O. Box 94079, 1090 GB Amsterdam, The Netherlands}\\
%{\small\sf Email J.H.van.Schuppen@cwi.nl}\\ [0.2cm]
%{\small\sf \today}
}
%\affil{USA University}
\date{\vspace{-5ex}}

%\maketitle
%\begin{abstract}
\twocolumn[
  \begin{@twocolumnfalse}
    \maketitle
    %\begin{abstract}
      \textbf{Abstract}
      
      ~~Synchronization is essential for proper functioning of the power grid.
We investigate the synchronous state
and its stability for a network with a cyclic topology and with the
evolution of the states satisfying the swing equations. We calculate the number of stable 
equilibria and investigate both the linear and nonlinear stability of the
synchronous state. The linear stability analysis shows that the
stability of the state, determined by the smallest nonzero eigenvalue,
is inversely proportional to the size of the network. The nonlinear stability, 
which we calculated by comparing the
potential energy of the type-1 saddles with that of the stable synchronous
state, depends on the network size ($N$) in a more complicated
fashion. In particular we find that when the generators and consumers
are evenly distributed in an alternating way, the energy barrier,
preventing loss of synchronization approaches a constant value. For a
heterogeneous distribution of generators and consumers, the energy
barrier will
decrease with $N$. The more heterogeneous the distribution is, the stronger
the energy barrier depends on $N$.
Finally, we found that by comparing situations with equal line loads in
cyclic and tree networks, tree networks
exhibit reduced stability. This difference disappears in the limit of
$N\to\infty$. This finding corroborates previous results reported in the
literature
and suggests that cyclic (sub)networks may be applied to enhance power
transfer while maintaining stable synchronous operation.\\
\quad \\
%\end{abstract}
 \end{@twocolumnfalse}
]
\saythanks

\section{Introduction}\label{sec:intro}

 The electrical power grid is a fundamental infrastructure in today's society.
Its enormous complexity makes it one of the most complex systems ever engineered by humans.
The highly interconnected structure of power grid delivers power over a long distance.
However, it also propagates local failures into the global network causing cascading failures.
 Due to careful control and management,
it has been operating for decades, mostly with great reliability. However, massive blackouts still occur,
such as, for example, the power failures in 2003 in the
Northern US and Canada and more recently a huge outage occurred in Turkey (70 million people affected) in March 2015.

The current transition to a more distributed generation of energy by renewable sources, which are inherently more
 prone to fluctuations, poses even greater
challenges to the functioning of the power grid. As the contribution of renewable energy to the total power 
being generated is surging, it becomes more challenging to keep the network stable against large disturbances.
In particular, it is essential that  power generators remain \emph{synchronized}. 
The objective of this paper is to study the influence of the distribution of power generation and consumption on the 
synchronization. We find that the heterogeneity of power generation and consumption decreases both the linear 
stability and the nonlinear stability. We show that large size cyclic power grids are more sensitive to 
the heterogeneity. In addition, a finding suggests that a line in a tree network loses synchronization more easily than
a line carrying the same amount of power in a ring network. The stability of the synchronous state 
can be improved by forming small cycles in the network. This finding may help optimize the power flow and 
design the topology of future power grids.

The significance of stable operation of the power grid was acknowledged long ago and has led to general stability studies for the power 
grid using direct methods\cite{Varaiya,chiang1,chiang2,Liu_chosest,Chiang3,Chiang4}. 
More recently
conditions for linear stability against small size disturbances were derived by Motter \emph{et al.}~\cite{motter} 
using the master stability 
formalism of Pecora and Carroll \cite{pecora}.
Complementary work on large disturbances was described in \cite{menck2}. In this work the concept of \emph{basin stability}
was applied to estimate the basin of attraction of the synchronous state. In particular
it was demonstrated that so-called dead-ends in a network may greatly reduce stability of the synchronous state.

%In this paper we focus
%on the synchronization properties of ring networks. We first derive a method to find all equilibria
%of a ring network; a similar question was recently addressed by Mehta, Dorfler \emph{et al.} \cite{mehta1} for %the case
%of linearly increasing frequencies in a first order Kuramoto model.
 %Our main results are
 %on a homogeneous ring network as this network is still to a large extent analytically tractable.

The primary interest of this paper is to study the influence of the distribution of generators and
consumers on synchronization.
Starting from the commonly used second-order swing equations, we reduce our model
 to a system of first-order differential equations using the techniques developed by Varaiya, Chiang \emph{et al.}\cite{Varaiya,book}
 to find  stability regions for synchronous operation of electric grids after a contingency. 
 After this reduction we are left with a first-order Kuramoto model with nearest neighbor coupling.

For the case of a ring network with homogeneous distribution of
generation and power consumption, we obtain analytical results which generalize earlier work of
De Ville~\cite{Lee}. In particular we derive analytical expressions for the stable equilibria and calculate their number. 
Furthermore, we investigate the more general case with
random distributions of generators and consumers numerically. To this end we develop a novel
algorithm that allows fast determination of the stable equilibria, as well as the saddle points in the system.

Subsequently, the stability of the equilibria is studied both using linearization techniques for linear stability and
direct (energy) methods for determining the nonlinear stability.
By comparing our stability results for different network sizes we show that the linear stability
properties differ greatly from those obtained by direct methods when the system size increases.
More specifically, the linear stability, measured by the first nonzero eigenvalue approximately scales
with the inverse of the number of nodes ($N$) as $1/N$.  This is in contrast to  the nonlinear stability result, which shows
that the potential energy barrier that prevents the synchronous state from lo osing stability approaches a nonzero value
for $N\to\infty$. For large size cyclic power grids, small perturbation on the power supply or consumer may lead
to desynchronization.
%Similar  behavior can be observed when the link capacity of the network or the average degree
%increases.
 Moreover, comparison of a ring topology with a tree topology,
reveals enhanced stability for the ring configuration.
This result suggests that the finding that dead-ends or dead-trees diminish stability by 
Menck \emph{et al.} \cite{menck2} can be interpreted as a special case of the more general fact 
that tree-like connection desynchronize easier than ring-like connection. 

This paper is organized as follows. In section \ref{sec:model} we define the model. In section \ref{sec:equil} 
we calculate the (number of) stable equilibria of 
cyclic networks and study the existence of the synchronized state. 
We next analyze the linear stability of the synchronous states in section \ref{sec:linear}. 
The proofs of the statements
in section \ref{sec:linear} are presented in the Appendix. In section \ref{Sec:nonlinear} we consider the nonlinear stability of the
synchronous state, which is measured by the potential energy difference between the saddles and the 
stable equilibrium. 
Finally we conclude with a summary of our results in section \ref{sec.conclusion}.

\section{Introduction of the model}
\label{sec:model}

The model that we use in this paper is commonly known as the swing equation model and has been
derived in a number of books and papers; see for example \cite{menck2,timme1,timme2,bergen}. This model is sometimes also referred to as a 
second-order Kuramoto model or a Kuramoto model with inertia \cite{filatrella}. The swing equation model is defined by the 
following differential equations
\begin{align}
 \frac{d^2\delta_i}{dt^2}+\alpha\frac{d\delta_i}{dt}+K\sum_{j}A_{ij}\sin(\delta_i-\delta_j)=P_i,
\label{eq:swing}
 \end{align}
where the summation is over all $N$ nodes in the network.
In Eq.~(\ref{eq:swing})  $\delta_i$ is the phase of the $i-$th generator/load and $P_i$ is
the power that is generated ($P_i>0$) or consumed ($P_i<0$) at node $i$ and $\alpha$ is the damping parameter  that
we take equal for all nodes. The link  or coupling strength
is denoted by ($K$) and $A_{ij}$ is the coefficient in the adjacency matrix of the network.

When we consider the case of a ring network, Eq.~(\ref{eq:swing}) reduce to
\begin{align}
 \hspace{-25pt}\frac{d^2\delta_i}{dt^2}+\alpha\frac{d\delta_i}{dt}+K[\sin(\delta_i-\delta_{i+1})+\sin(\delta_i-\delta_{i-1})]=P_i,
\label{eq:rswing}
 \end{align}
 with $i=1,2,\dots,N$. In writing Eq.~(\ref{eq:rswing}) we assumed that $\delta_{N+i}=\delta_i$.
We  usually rewrite the second-order differential equations  as the first-order system
\begin{align}
\hspace{-15pt}
\dot{\delta}_i&=\omega_i ,\nonumber\\
\hspace{-15pt}
\dot{\omega}_i&=P_i-\alpha\omega_i-K [\sin(\delta_i-\delta_{i+1})+\sin(\delta_i-\delta_{i-1})]. \label{eq:stelsel}
\end{align}
Note that since the total consumption must equal the total amount of power being generated in equilibrium,
 synchronous operation of the system implies that $$\sum_{i=1}^N P_i=0.$$

Let us assume that node $i$ is a generator, then $P_i>0$. The term  ${\cal{L}}_{i,i-1}=K\sin(\delta_i-\delta_{i-1})$ in Eq.~(\ref{eq:stelsel}) 
then corresponds to the power that is transported from node $i$ to node $i-1$ and we will refer to the quantity ${\cal{L}}_{i,i-1}$ as 
the \emph{line load} of the line between node $i$ and $i-1$. In a
similar way ${\cal{L}}_{i,i+1}=K\sin(\delta_i-\delta_{i+1})$ is the line load of the link connecting node $i$ and $i+1$.
For the case $i$ is a consumer node, a similar interpretation can be given.  

In this paper we will focus on two different models. The first model is the model in 
which power $P$ is generated at the odd nodes and $-P$ is consumed at the even nodes, which we can capture as 
\begin{equation}
P_i=(-1)^{i+1}P.
\label{eq:ALT}
\end{equation}
We will refer to this model as the \emph{homogeneous model}. 

In the second model we break the symmetry and allow variations in the power generated and consumed at each node, but in
such a way that the net total generated and consumed power vanishes ($\sum_{i=1}^N P_i=0$). This can be accomplished by
\begin{align}
&P_i=(-1)^{i+1}P+\xi_i,~~~~~~~ i=1,2,\dots N-1.\nonumber\\
&\xi_i\in N(0,\sigma),~~{\rm and}~~\sum_{i=1}^N {P_i}=0.
\label{eq:pres}
\end{align}
Here $N(0,\sigma)$ is the normal distribution with standard deviation $\sigma>0$ and mean $0$ and $\xi_{i}$ is a random number.
Eq.~(\ref{eq:pres}) expresses that the new distribution of generators and consumers is
obtained by a Gaussian perturbation of the homogeneous model that we started with.
This model with a heterogeneous distribution of generated and consumed power will be referred to as the \emph{heterogeneous model} and
the degree of heterogeneity of $P_{i}$ is measured by $\sigma$. 
 
To investigate the linear stability of the synchronous (equilibrium) state, Eqs.~(\ref{eq:stelsel}) are linearized
around an equilibrium state $(\delta_i^s,0),~~i=1,\dots,N$. Using vector notation
${\bm{ \delta}}=(\delta_1,\dots,\delta_N)^T$ and  ${\bm{ \omega}}=(\omega_1,\dots,\omega_N)^T$, the linearized dynamics
is given by the matrix differential equation
\begin{align}
\left(\begin{array}{c} {\bm{\dot{\delta}}} \\ {\bm{ \dot{\omega}}}\end{array}\right)&=\left(\begin{array}{cc} 0 & I_N\\
L & -\alpha\end{array}\right) \left(\begin{array}{c} {\bm\delta} \\{\bm \omega}\end{array}\right)= J \left(\begin{array}{c} {\bm{\delta}} \\{\bm{ \omega}}\end{array}\right),
\label{eq:matrix}
\end{align}
with $L$ the (negative) Laplacian matrix defined by
\begin{align}
L_{i,i-1}&=K \cos(\delta_i-\delta_{i-1}),\nonumber\\
L_{i,i+1}&=K \cos(\delta_i-\delta_{i+1}),\nonumber\\
L_{i,i}&=-K [\cos(\delta_i-\delta_{i-1})+ \cos(\delta_i-\delta_{i+1})].
\end{align}
The eigenvalues of $L$, denoted by $\lambda_i$, are related to the eigenvalues of $J$, denoted by $\mu_{i}$, according to the following equation
\begin{align}
\mu_{i\pm}=-\frac{\alpha}{2}{\pm}\frac{1}{2}\sqrt{\alpha^2+4\lambda_i},~~~~ i=1,2,\dots,N-1.
\label{eq:mu}
\end{align}
These $2N-2$ eigenvalues are supplemented by two eigenvalues $0$; one corresponding to a uniform frequency shift, 
the other to a uniform phase shift. For $\alpha>0$, the real part of $\mu_{i\pm}$ is negative if $\lambda_{i}<0$. 
The \emph{ type-$j$ equilibria} are defined as the ones whose Jacobian matrix $J$ have $j$ eigenvalues with a positive real part.

\section{The equilibria of ring networks}
\label{sec:equil}
In this section, we study a ring network consisting of an even number of nodes ($N$) with $N/2$ generators and $N/2$ consumers,
which are connected alternatingly by $N$ links as shown in Fig.~\ref{fig:fig1}. The phase differences between neighbors are
\begin{align}
\nonumber
\hspace{-25pt}
\theta_1 \equiv\delta_{1}-\delta_{N}  (\rm{mod}~~2\pi),~\theta_{i+1} \equiv \delta_{i+1}-\delta_{i} (\rm{mod}~~2\pi).
\end{align}

\begin{figure}[ht]
 \centering
  \includegraphics[scale=0.4]{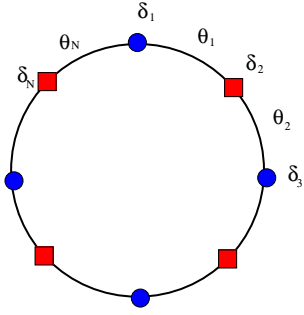}
\caption{{\small A ring network with alternating consumer and generator nodes. Circle nodes are generators and square nodes are consumers.}}
 \label{fig:fig1}
\end{figure}

%and the corresponding
%weights of the graph are $\gamma_{i}=\cos\theta_{i},i=1,\cdots,N$.
To find the equilibrium points we set $(\delta_i,\dot{\delta}_i)=(\delta^s,0)$ in Eqs.~(\ref{eq:stelsel})
from which we find the following equations for $\theta_i$
\begin{align}
\sin \theta_{i} -\sin \theta_{i+1}&=P_{i}/K,~~~~~~~~~~ i=1,\cdots, N
\label{eq:equil}
\end{align}

Because all phase differences $\theta_i$ are restricted to
one period of $\theta_{i}$, the following additional requirement holds
{\small 
\begin{align}
\hspace{-25pt}\sum_{i=1}^{N} \theta_{i}=2 m\pi,~~~ m\in~\{-\lfloor N/2\rfloor  ,\dots,-1,0,1,\dots,\lfloor N/2\rfloor\},
\label{eq:additional_require}
\end{align}
}
where $\lfloor N/2 \rfloor$ denotes the floor value of $N/2$, that is, the largest integer 
value which is smaller than or equal to $N/2$. Each equilibrium corresponds to a synchronous state whose stability we wish to determine.
We first calculate the number of stable equilibria of the homogeneous model. 
Note that these equilibria correspond to phase-locked solutions of a first-order Kuramoto model that was explored by 
De Ville~\cite{Lee} for the case $P=0$.

% \begin{figure}[h]
% \includegraphics[scale=0.25]{NS_PK.eps}
% \caption{{\small The number of stable equilibria according to Eq.~(\ref{eq:numbstab}) compared to the
% numerical calculated  number of equilibria. The number of stable equilibria is plotted as a
% function of $P/K$. $N_S$ decreases stepwise as $P/K$ increases.}}
% \label{fig:fig2}
% \end{figure}

%The stability of the equilibria is determined by the the Jacobian matrix of system
%and $|\gamma_{2i+1}|=|K\cos\theta_{1}|=\eta_{1}, |\gamma_{2i}|=|K\cos\theta_{2}|=\eta_{2}$.
%The matrix $L$  has the formdots & 0      \\
%0            & -\gamma_{2}     & \gamma_{2}+\gamma_{3}& \cdots &0    \\
% \quad       &  \quad       & \quad       & \cdots   & \quad   \\
% -\gamma_{N}     & 0            &  0          &-\gamma_{N-1}   &\gamma_{N-1}+\gamma_{N}\\
% \end{array} \right)\]

\subsection{The equilibria of homogeneous model}

In this subsection, the number of stable equilibria is determined by solving the nonlinear system analytically.
Our approach is similar to that of Ochab and G\'ora \cite{ochab}. In the homogeneous model 
 we have $P_i=(-1)^{i+1} P$, so that
consumers correspond to even nodes and generators to odd nodes in the network.

It can easily be shown that the relative phase differences $\theta_i$ for all even values
are either the same, that is $\theta_{2i}=\theta_{2}$, or satisfy $\theta_{2i}=\pi-\theta_{2}$ for all $i=1,2,\cdots N/2$.
Similarly, for odd values of $i$ the $\theta_i$ are all
equal to $\theta_1$ or to $\pi-\theta_1$. In this subsection, we consider 
the case $\theta_{i}\in[-\pi/2,\pi/2]$. So we need only consider equation
(\ref{eq:equil}) for a single value of $i$, which we take to be $i=1$. 
The equations for homogeneous model read
\begin{subequations}
\begin{align}
\sin \theta_1-\sin \theta_2&=P/K \label{a},\\
\theta_1+\theta_2&= \frac{4 \pi m}{N},\label{b}
\end{align}
\end{subequations}
We substitute  the value for $\theta_2=\frac{4\pi m}{N}-\theta_1$ in Eq.~(\ref{a}) and solve for $\theta_1$.
From that we can find that the equilibrium values for $\theta_1$ and $\theta_2$, which are given by
\begin{subequations}\label{eqs:theta12}
\begin{align}
\theta_1&=\arcsin\left[\frac{P}{2K\cos \frac{2m \pi}{N}}\right]+\frac{2\pi m}{N},\label{t1a}\\
\theta_2&=-\arcsin\left[\frac{P}{2K\cos \frac{2m \pi}{N}}\right]+\frac{2\pi m}{N},\label{t1b}
%\label{eqs:theta12}
\end{align}
\end{subequations}
where $m \in \{ \lfloor N/2\rfloor,\dots,-1,0,1,\dots,\lfloor N/2\rfloor \}$.
 
Here we remark that in order to find a solution the condition $\frac{P}{2K\cos \frac{m \pi}{N}}\,{\leq}\,1$ needs to be satisfied. 
We can use this requirement to find the total number of stable equilibria. Later we show that
only the equilibria with $\theta_1,\theta_2 \in [-\pi/2,\pi/2]$ are stable. Imposing this additional requirement
results in a total number of stable equilibria given by
\begin{align}
N_S=1+2 \lfloor \frac{N}{2\pi} \arccos\left(\sqrt{\frac{P}{2K}}\right)\rfloor.
\label{eq:numbstab}
\end{align}
Details of this calculation can be found in Appendix A. In Fig.~\ref{fig:Num_eqs}(a), we show the total number of stable
equilibria $N_S$ as a function of $P/K$.
It can be clearly seen from this figure that the total number of stable equilibria decreases with $P/K$ and 
reaches 0 when $P/K=2$.

Next we will calculate the number of stable equilibria of the cyclic power system with heterogeneous distribution of
generation and consumption. 

\subsection{The equilibria of heterogeneous model}
To investigate the effect of the distribution of $P_i$,
we performed Monte Carlo (MC) simulations of the system (\ref{eq:stelsel}) with the
distribution of $P_i$ given by Eq.~(\ref{eq:pres}). 

All the equilibria of small size power systems can be found 
using a software package Bertini \cite{mehta1,mehta2,Bertini}. However,
we perform numerical calculations using the algorithm of Appendix D since the size of the networks is relatively large. 
The algorithm amounts to
finding all solutions for $\beta$ of
\[
\sum_{i=1}^N a_i \arcsin\left(\sum_{j=1}^i {P_j}/K+\beta\right)= m \pi,
\]
where $a_i=1$ when the phase difference $\theta_i\in[-\pi/2,\pi/2]$ and $a_i=-1$ if $\theta_i\in[\pi/2,3\pi/2]$.
For details and bounds on the values of $m$ we refer to Appendix D.
Since the number of equilibria is known to increase at least exponentially with $N$ \cite{luxem,baillieul}, it is not
feasible to find all equilibria for large networks. Therefore, we developed an algorithm based on
a recent theoretical paper of Bronski and De Ville \cite{bronski} for finding all equilibria of type-$j$.
Details about the algorithm can be found in Appendix D. We are particularly interested in type-1 equilibria,
as a union of the stable manifolds of these equilibria can be used to approximate the basin of stability of the stable 
equilibria. Our algorithm is capable to find type-1 equilibria at a computational cost of ${\rm O}\left(N^3\right)$ 
and hence can be applied to rather large networks. We remark that this algorithm might be extended to more 
general networks. Employing this algorithm, 
% the critical link capacity and the existence of synchronized 
% states and 
the number of stable equilibria are investigated as follows. 

\begin{figure}[ht]
\centering
%\sidesubfloat[]{\includegraphics[scale=0.3]{NS_PK.eps}}
%\sidesubfloat[]{\includegraphics[scale=0.3]{Nstable_VS_Node.eps}}
\includegraphics[scale=1.1]{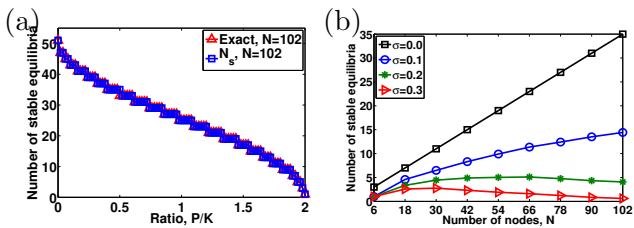}
%\sidesubfloat[]{\includegraphics[scale=0.16]{KCe_KCs.eps}}
%\sidesubfloat[]{\includegraphics[scale=0.16]{Total_VS_Node14.eps}}
\caption{{\small (a) The number of stable equilibria according to Eq.~(\ref{eq:numbstab}) compared to the
numerically calculated  number of stable equilibria.
(b) The number of stable equilibria as a function of $N$, $P/K=0.5$. With larger $\sigma$,
it becomes more difficult for the power system to synchronize to a stable state.
% (c) The solid curve with stars and 
% the dashed curve with circles plot $K_{e}$ and $K_{c}$ as functions of $N$ with nonzero $\sigma$ respectively. 
% (d) The number of equilibria as a function of  $P/K$ of the
%  cyclic power grids with $N=14$. $P=1$ and $K$ is varied in the simulations. We remark that when $P/K=1$ and $\sigma=0$, 
% singular equilibria emerge and the number of stable equilibria at $P/K=1.01$ is calculated instead.
} }
\label{fig:Num_eqs}
\end{figure}

% \subsubsection{The number of stable equilibria of hetereogeneous model}

Our algorithm was applied to networks with $P/K=0.5$ described by the heterogeneous model. In the simulations we
average over 1000 independent runs  for each value of $\sigma$. In Fig.~\ref{fig:Num_eqs}(b) the number of stable equilibria
is plotted as a function of the number of nodes that we vary from $N=6$ to $N=102$, for 4 different values
of $\sigma=0$, $0.1$, $0.2$, $0.3$. It can clearly be seen that for $\sigma=0.2$  or $\sigma=0.3$ the number of stable equilibria
attains a maximum value and then decreases. The same behavior is also expected for $\sigma=0.1$. However, the
decrease is expected to set in at larger values of $N$, hence such behavior cannot be observed from Fig.~\ref{fig:Num_eqs}(b).
The occurrence of a maximum for nonzero $\sigma$ can be understood as follows.
If the number of nodes increases, the probability that a  phase difference between two nodes exceeds $\pi/2$ also
increases. Even though for moderately large $N$ ($1<N<50$) the fact that more equilibria can be found increases
linearly with $N$, as was shown in Eq.~(\ref{eq:numbstab}), this increase is much smaller than the decrease caused by the
arising probability of phase differences beyond $\pi/2$. This may be explained by the fact that in larger networks
the probability to form clusters in which neighboring nodes $i$ and $i+1$ have large $\Delta P=|P_i-P_{i+1}|$
increases more rapidly than linearly with $N$. As a larger $\Delta P$ is associated with larger phase differences, 
such clusters with large fluctuations in $\Delta P$ between its members are likely to result in asynchronous behavior. 
This finding is in agreement with the well-known result that no synchronous states
exist for an infinite Kuramoto network; see also \cite{strogatz}.

Note that for certain network distribution of $P_{i}$, equilibria can be found
with at least one phase difference exceeding $\frac{\pi}{2}$, but nevertheless being stable. This is
in accordance with the graph theoretical result of Bronski and De Ville~\cite{bronski} and
numerical findings of Metha \emph{et al.}~\cite{mehta1}.

% The dependence of the number of equilibria on $P/K$ was investigated for a network with $N=14$ nodes. These results
% are displayed in Fig.~\ref{fig:Num_eqs}(d). Here we clearly see that the number of equilibria rises to a constant limit
% value as $P/K\rightarrow 0$. For $\sigma=0$, the total number of equilibria decreases stepwise with the increasing $P/K$ which is
% similar to the behavior
% of the number of 
% stable equilibria for $N=102$ shown in Fig.~\ref{fig:Num_eqs}(a). 
% Moreover, Fig.~\ref{fig:Num_eqs}(d) also shows that the number of
% equilibria becomes smaller for increasing values of $\sigma$. 

\section{Linear stability of equilibria}
\label{sec:linear}

To determine the stability of the equilibria, the eigenvalues of the matrix
corresponding to the system of second-order differential equations are required.
These can be calculated analytically for single generator coupled to an infinite bus system for any value
of damping parameter $\alpha$, in which case the system is described by a single second-order differential equation.
Such an approach was also taken by Rohden \emph{et al.} \cite{timme1,timme2}.

The eigenvalues of the linearized system Eq.~(\ref{eq:matrix}) can be explained in forms of the eigenvalues of $L$ as shown 
in Eq.~(\ref{eq:mu}). 
% Matrix $L$ has in general many
% negative eigenvalues. For example, for the symmetric case in which all phase differences $\delta_i-\delta_{i-1}$ take
% a common constant positive value, all nonzero eigenvalues are in the interval $\left.\left[-4K,0\right)\right.$.
% So in the case $\alpha\,{\geq}\,4\sqrt{K}$, which we call large damping, all eigenvalues $\mu_i$ 
% are negative. 
For positive $\alpha$, a positive eigenvalue $\lambda_i$ of $L$ results in a corresponding 
eigenvalue $\mu_i$ with positive real part \cite{zab2}. 
So the stability of equilibrium is determined by the eigenvalues of $L$.
% This is generally true whenever
%$\lambda_i{\geq}-\alpha^2/4$ for all $i=1,\dots,N-1$. As soon as one eigenvalue $\lambda_i>0$ this also results
%in one eigenvalue $\mu_i>0$.

The equilibrium with all eigenvalues of $L$ negative and damping parameter $\alpha$ positive 
is most interesting for power grids. We find that in this case all $N-1$ pairs 
of eigenvalues Eq.~(\ref{eq:mu}) are complex valued with negative real part. Hence the system is stable in this case. 
The most stable situation arises when the damping coefficient $\alpha$ is tuned to the optimal 
value $\alpha_{\rm{opt}}$ described by Motter \emph{et al.} \cite {motter}: $\alpha_{\rm{opt}}=2\sqrt{-\lambda_1}$, 
where $\lambda_1$ is the least negative eigenvalue of $L$, in that
case $\mu_1=-\sqrt{-\lambda_1}$.
So the linear stability is governed by the
eigenvalues of $L$. We will therefore further investigate the eigenvalues of $L$
for ring networks in this section.

% For larger networks one has to resort to numerical methods, which has for example been pursued 
% by Hellmann \emph{et al}\cite{kurths2} using the concept
%  of survivability which generalizes the concept of basin stability~\cite{kurths,menck}.

% Here we restrict ourselves to the large damping case, which simplifies matters as, we
% the eigenvalues of the matrix $L$ in Eq.~(\ref{eq:}) determine the linear dynamics of the system %(\ref{eq:stelsel}).
%Our approach, however, is inspired by the work of Chiang who studied direct methods for
%the establishing results of stability of equilibria, by first reducing the second order
%system to a system of first order differential equations. This essentially
%amounts to neglecting the term proportional to $\omega_i$ in the
%differential equation for $\dot{\omega}_i$ and next replacing $\dot{\omega}_i$ by $\dot{\delta}_i$.
%It can be shown that this procedure
%is allowed for sufficiently large damping and when transversality conditions are satisfied.
%So our results discussed below only apply for moderately high damping. The small damping case will be discussed
%elsewhere.

The entries of the matrix $L$ that arises after linearization around the synchronized
state $({\bm \delta}^s,{\bf 0})$ are easily calculated and from that we find that $L$ is the 
following Laplacian matrix
{\tiny{
\begin{align}
\hspace{-25pt}
L=\left(\begin{array}{cccccc}
-c_{2}-c_{1} & c_{2}         & 0      & \cdots  &0                & c_{1} \\
c_{2}        &  -c_{2}-c_{3} & c_{3}  & 0       &\cdots           & 0 \\
0            & \ddots        & \ddots & \ddots  &\ddots           & 0\\
0            & \cdots        & 0      &c_{N-2}  &-c_{N-2}-c_{N-1} &c_{N-1}\\
c_{1}        &0              & \cdots & 0       & c_{N-1}         & -c _{1} -c_{N-1}
\end{array} \right),
\label{eq:defL}
\end{align}
}
}
where $c_{i}=K\cos(\delta_{i}-\delta_{i-1})=K\cos{\theta_{i}}$. As matrix $L$ is a (symmetric)
Laplacian matrix
with zero-sum rows, $\lambda=0$ is an eigenvalue. This reflects a symmetry in the system:
if all phases are shifted by the same amount $\beta$, the system of differential
equations remains invariant. It is well known that when all entries
$c_{i}>0$, $L$ is negative definite, hence all eigenvalues are non-positive which implies stable
equilibria when the phase differences $|\delta_i-\delta_{i-1}|\,\leq\,\pi/2~~~(\rm{mod}~ 2\pi)$, for all $i=1,\dots,N$.

%For the case that at least one phase difference is larger than $\pi/2$ linear stability is lost.
\subsection{The linear stability of homogeneous model}

For the configuration with a homogeneous distribution of power generation and consumption 
we can derive a theorem which shows that type-1 saddle points, which are saddle points with one unstable 
eigen direction, appear if
a single phase difference between two nodes has negative cosine value. Saddles with more
unstable directions result when more phase differences have negative cosine value. In the following, 
we write a phase difference 
exceeds $\pi/2~(\rm{mod}~ 2\pi)$ if it has negative cosine value. 
We  summarize our findings  in the following theorem  which slightly generalizes similar results obtained in \cite{Lee,rogge}. 

\begin{thm}{All stable equilibria of a power grid with ring topology and homogeneous distribution of power consumption
and generation as described in Eq.~(\ref{eq:ALT}) are given by Eq.~(\ref{eqs:theta12}). 
Stability of the synchronous states in the network, corresponding 
to negative eigenvalues of the matrix $J$, is guaranteed as long as $|\delta_i-\delta_{i-1}|\,\leq\,\pi/2~(\rm{mod}~ 2\pi)$. If a
single phase difference exceeds $\pi/2~(\rm{mod}~ 2\pi)$ this synchronous state turns unstable and the corresponding equilibrium is type-1. 
Moreover, 
synchronized states with more than one absolute phase difference exceeding $\pi/2$ $(\rm{mod}~ 2\pi)$ correspond to equilibria with at
least two unstable directions, that is, to type-$j$ equilibria with $j>1$.}
\label{thm:stability}
\end{thm}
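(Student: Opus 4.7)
The plan is to reduce the linear-stability question to a signed-inertia problem via the factorization $L=-B\,\mathrm{diag}(c)\,B^{T}$, where $B$ is the $N\times N$ signed incidence matrix of the cycle and $c_i=K\cos\theta_i$. Because $B$ has rank $N-1$, with $\mathrm{null}(B^{T})=\mathrm{span}(\mathbf{1})$ on the vertex side and $\mathrm{null}(B)=\mathrm{span}(z)$ on the edge side (where $z=\mathbf{1}_{\mathrm{edge}}$ is the cycle vector), the identity $v^{T}Lv=-(B^{T}v)^{T}\mathrm{diag}(c)(B^{T}v)$ identifies the inertia of $L$ restricted to $\mathbf{1}^{\perp}$ with the inertia of $-\mathrm{diag}(c)$ restricted to $z^{\perp}\subset\mathbb{R}^{N}$ (edge space). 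Counting positive eigenvalues of this restricted diagonal form then drives the whole argument.

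Part (i) is immediate: when all $|\theta_i|\le\pi/2\pmod{2\pi}$ every $c_i\ge 0$, so $-\mathrm{diag}(c)$ is negative semidefinite and $L$ is negative semidefinite with a single zero eigenvalue from $\mathbf{1}$; Eq.~(\ref{eq:mu}) then yields negative real parts for all $2N-2$ nontrivial eigenvalues of $J$, and the enumeration by Eqs.~(\ref{eqs:theta12}) is inherited from the analysis preceding that display. For parts (ii) and (iii) I would apply Cauchy interlacing. If exactly $k$ phase differences exceed $\pi/2\pmod{2\pi}$, then $-\mathrm{diag}(c)$ has signature $(k,N-k,0)$ on $\mathbb{R}^{N}$, and the restriction to the codimension-one subspace $z^{\perp}$ therefore has either $k$ or $k-1$ positive eigenvalues. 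The distinguishing quantity is the sign of $z^{T}(-\mathrm{diag}(c))z=-\sum_i c_i$, which in the homogeneous $m=0$ branch (where all $|c_i|$ coincide) has the sign of $N-2k$.

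To pin down the count exactly I would decompose $z^{\perp}$ $-\mathrm{diag}(c)$-orthogonally as $(V_{+}\cap z^{\perp})\oplus(V_{-}\cap z^{\perp})\oplus\mathrm{span}(u)$, where $V_{\pm}$ are the positive/negative eigenspaces of $-\mathrm{diag}(c)$ and $u=(N-k)\mathbf{1}_{V_{+}}-k\mathbf{1}_{V_{-}}$ spans the one-dimensional complement. A short computation shows that the three summands are mutually $-\mathrm{diag}(c)$-orthogonal and that $u^{T}(-\mathrm{diag}(c))u=|c|\,k(N-k)(N-2k)$. The first two pieces automatically contribute $k-1$ positive and $N-k-1$ negative directions; combining this with the sign of $u^{T}(-\mathrm{diag}(c))u$ gives exactly $\min(k,N-k)$ positive eigenvalues of $L$ whenever $k\neq N/2$. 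For $k=1$ this produces a single positive eigenvalue, hence one positive real eigenvalue of $J$ via Eq.~(\ref{eq:mu}), i.e.\ a type-1 saddle; for $2\le k\le N-2$ with $k\neq N/2$ we obtain $j\ge 2$ unstable directions, as claimed.

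The main obstacle is the Cauchy-interlacing step. A naive approach---testing $L$ on the ``natural'' local vectors $e_{i-1}-e_{i}$ supported at flipped edges---can undercount positive directions: when two flipped edges share a vertex, the $2\times 2$ Gram matrix of $L$ on their span has rank one, so only a single positive direction shows up locally. Identifying the global direction $u$ above is therefore essential, and its Rayleigh quotient scales with $N-2k$, cleanly separating the $k<N/2$ and $k>N/2$ branches. A genuine degeneracy persists at the borderline $k=N/2$, where $u$ becomes isotropic for $-\mathrm{diag}(c)$ and $L$ acquires an extra zero eigenvalue arising from a symmetry-generated one-parameter family of equilibria; for $N\ge 6$ this still leaves $j\ge 2$ unstable directions, so the theorem's conclusion is preserved. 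The same decomposition applies on the branches $m\neq 0$ of Eqs.~(\ref{eqs:theta12}), with $z^{T}(-\mathrm{diag}(c))z$ computed from the specific magnitudes $K\cos\theta_{1}$ and $K\cos\theta_{2}$ in place of a single $|c|$.
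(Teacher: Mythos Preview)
Your route via the factorization $L=-B\,\mathrm{diag}(c)\,B^{T}$ and edge-space inertia is genuinely different from the paper's. The paper writes the one-flip Laplacian as a rank-one update $L'=L+xx^{T}$ of the stable-state Laplacian, uses Weyl's inequality to bound the number of positive eigenvalues above by one, and then computes the determinant of the reduced $(N{-}1)\times(N{-}1)$ matrix explicitly (via recurrences in $a=K\cos\theta_1^m$, $b=K\cos\theta_2^m$) to show exactly one eigenvalue is positive; the two-flip case repeats this with a second rank-one update and a second determinant. Your incidence-matrix argument is more structural---essentially the Bronski--DeVille framework the paper itself flags as an alternative in the remark after its proof---and treats all $k$ uniformly, at the price of needing the auxiliary vector $u$ to break the $k$ vs.\ $k{-}1$ ambiguity left by interlacing. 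Two points to tighten. First, your own computation $u^{T}(-\mathrm{diag}(c))u=|c|\,k(N{-}k)(N{-}2k)$ yields $k$ positive eigenvalues when $k<N/2$ and $k{-}1$ when $k>N/2$, not $\min(k,N{-}k)$; the two formulas disagree for $k>N/2$, though either version still delivers $j\ge 2$ for $k\ge 2$, so the theorem survives. Second, the $(-\mathrm{diag}(c))$-orthogonality of $u$ to $V_{\pm}\cap z^{\perp}$ hinges on all $|c_i|$ coinciding, which holds for $m=0$ but fails for $m\neq 0$ where $\cos\theta_1\neq\cos\theta_2$; there the three-block split breaks down and you must instead check the sign of $z^{T}(-\mathrm{diag}(c))^{-1}z=-\sum_i 1/c_i$ with the two distinct magnitudes present---effectively the same bookkeeping the paper's determinant formula in $a,b$ performs.
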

Since one positive eigenvalue of $J$ corresponds to one eigenvalue with positive real part of $L$, we only
need to analyze the eigenvalues of $L$. The proof follows after Theorem \ref{thm:matrxi_L} in Appendix A.  

Theorem \ref{thm:stability} confirms that Eqs.~(\ref{eqs:theta12}) indeed capture
all the stable equilibria of the homogeneous model. 

Before considering the case of heterogeneous power generation and consumption, we make two remarks.

{\bf Remark I} We notice that for the case $N{\equiv}0~({\rm mod~4})$
an infinite number of equilibria exist for the homogeneous model. We will not consider this  nongeneric
case here, but refer to the work of De Ville~\cite{Lee} for more details about this case.

{\bf Remark II}
The equilibria we found depend on $m$. For practical purposes the case $m=0$ 
is most desirable for transport of electricity, as in this case direct transport of power from the generator to the consumer is realized.
Direct transport from generator to consumer minimizes energy losses that always accompany the transport of electrical power. 
Only when $m=0$, the power is transported to the consumer directly. The power is transported clockwise if $m<0$ and counterclockwise 
if $m>0$ as shown in Fig. \ref{fig:power_transport}(a).

%\floatsetup[figure]{style=plain,subcapbesideposition=top}
\begin{figure}[ht]
\centering
% \sidesubfloat[]{\includegraphics[width=3.0cm]{Ring_with_arrows.eps}}\qquad
% \sidesubfloat[]{\includegraphics[width=3.0cm]{three_node_tree.eps}}
\includegraphics[scale=1.0]{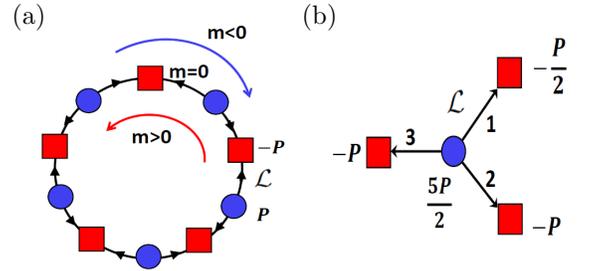}
\caption{{\small (a). A cyclic power grid with alternating consumers and generators, which may have stable equilibria with the power transported around
the cycle clockwise with $m<0$ and counterclockwise with $m>0$. The practical synchronization state is the one with $m=0$,  
in which the line load $\mathcal{L}=P/2$. (b). A tree power grid with 3 nodes. The line load of line 1 $\mathcal{L}=P/2$.}}
\label{fig:power_transport}
\end{figure}

%\subsection{Consequences for the stability of the ${\bf m=0}$ mode for the second order model}
For the case $m=0$, the stable equilibrium is $${\bm \theta}_S=(\theta_{1},\theta_{2},\cdots,\theta_{N})$$ with
$\theta_{1}=-\theta_{2}=\arcsin\left[\frac{P}{2K}\right]$ as follows from Eq.(\ref{eqs:theta12}).  
It is interesting to explore the ramifications of our results for the eigenvalues of $L$ of the second-order model. 
We write the eigenvalues of the matrix $L$ that
result after linearizing around the stable state (\ref{eqs:theta12}) with $m=0$, which can easily be determined:
\begin{align}
\hspace{-25pt}
\lambda_n&=-2\sqrt{4K^2-P^2}\sin^2\left(\frac{\pi n}{N}\right),~~~{n=0,1,\cdots,N-1}
\label{eq:lai}
\end{align}
The first nonzero eigenvalue, $$\lambda_1=-2\sqrt{4K^2-P^2}\sin^2(\pi/N),$$ gives rise to an associated
eigenvalue pair for matrix $J$
\begin{align}
\mu_{1,+}&=\frac{-\alpha}{2}+\frac{\sqrt{\alpha^2-8\sin^{2}{(\pi/N)}\sqrt{4K^2-P^2}}}{2},\nonumber\\
\mu_{1,-}&=\frac{-\alpha}{2}-\frac{\sqrt{\alpha^2-8\sin^{2}{(\pi/N)}\sqrt{4K^2-P^2}}}{2},
\end{align}
whose optimal value is obtained if $\alpha$ is tuned to the value which makes the square root vanish \cite{motter}.
For this value of $\alpha$, $\mu_{1,+}$=$\mu_{1,-}$=$\mu_{\rm opt}$, which equals
\begin{align}
\mu_{\rm opt}&=-\sqrt{-\lambda_1}=-\left(4K^2-P^2\right)^{1/4}\sqrt{2}\sin\left(\pi/N\right).
\label{eq:muopt}
\end{align}
From Eq.~(\ref{eq:muopt}) we easily observe that $\mu_{\rm opt}$ increases to $0$ with a rate of $1/N$ for $N$
sufficiently large. This suggests that networks with many nodes will be much more
susceptible to perturbations and hence it will be more difficult for the power grid to remain synchronized.

\subsection{The linear stability of heterogeneous model}

To investigate the effect of more heterogeneous distribution of power generation and consumption, we
determine the linear stability of the stable equilibria found using the numerical algorithm
described in Appendix D. We perform MC simulations to generate heterogeneous distributions of
power generation and consumption using the method given in Eq.~(\ref{eq:pres}), and average over 1000 runs.
In all runs we set $P=1$ and $K=8$, so $P/K=0.125$. In Fig.~\ref{fig:eigen}(a) we plotted the value of $-\mu_{\rm opt}$ for two 
values of $\sigma$ as a function of $N$. Indeed the dependence on $N$ is as predicted, and 
the two curves almost coincide, which means the eigenvalue is not so sensitive to the heterogeneity of power
distribution for the setting of $P$ and $K$. In Fig.~\ref{fig:eigen}(b) we explore 
the dependence on $\sigma$. Here we see as the heterogeneity of $P_{i}$ increases, the expected 
linear stability decreases. However only a very mild dependence
on $\sigma$ can be seen, so the heterogeneity does not seem to be very important for this value of $P/K$.
To better understand how each configuration of consumers and generators rather than the averaged configuration
changes its stability with increasing heterogeneity, we plotted the distribution
of $-\mu_{\rm opt}$ in Fig.~\ref{fig:eigen}(c) and (d). These show that besides a small shift of the maximum
toward smaller values of $-\mu_{\rm opt}$ the distribution is also broader, which indicates that
certain configuration will be less stable than others. We remark that 
the value of $y$ axis is relatively large, which means that the $-\mu_{\rm opt}$ is very close
to the average value. 

%\onecolumngrid
%\floatsetup[figure]{style=plain,subcapbesideposition=top}
\begin{figure}[ht]
\centering
% \sidesubfloat[]{\includegraphics[scale=0.3]{Eigen_VS_Node.eps}}
% \sidesubfloat[]{\includegraphics[scale=0.3]{Eigen_VS_Sigma.eps}}
% 
% \sidesubfloat[]{\includegraphics[scale=0.3]{Eigen_VS_density_N22.eps}}
% \sidesubfloat[]{\includegraphics[scale=0.3]{Eigen_VS_density_N30.eps}}
\includegraphics[scale=1.1]{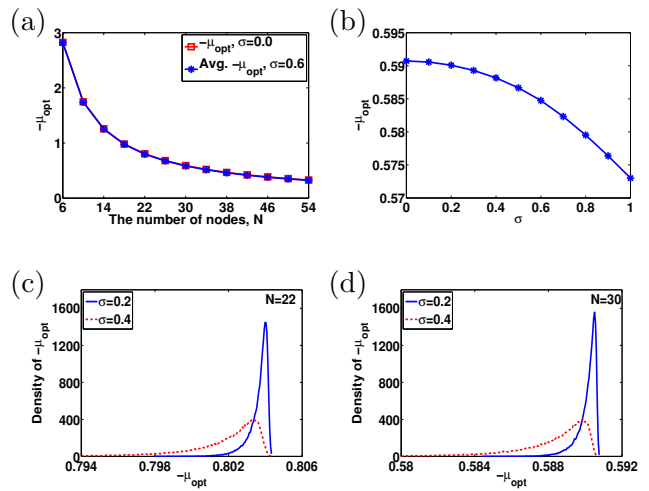}
\caption{{\small (a). $-\mu_{\rm opt}$ as a function of $N$ for $\sigma=0$ and $\sigma=0.6$. 
 (b). $-\mu_{\rm opt}$ as a function of $\sigma$ with $N=30$.
(c). The distribution of $-\mu_{\rm opt}$ for $\sigma=0.2$ and $\sigma=0.4$ and $N=22$.  (d).
The density of $-\mu_{\rm opt}$ for $\sigma=0.2$ and $\sigma=0.4$ where $N=30$. $P/K=0.125$ is kept fixed in all panels. }}
\label{fig:eigen}
\end{figure}
%\twocolumngrid
\section{Nonlinear stability of the synchronous state in ring networks}
\label{Sec:nonlinear}

We next discuss the stability of synchronous operation when the system is subject to perturbations
of such a degree that render the linear stability analysis of the previous section inappropriate. A measure for the
stability of the stationary states is then provided by the basin of attraction of the equilibria. For high-dimensional
systems this is a daunting task. However, it is possible to estimate the volume of the basin either
by numerical techniques, such as for example, the recently introduced basin stability $S$, by Menck \emph{et al.}\cite{menck,menck2},
in which the phase space is divided into small volumes. Choosing initial conditions in each of the small volumes
and recording convergence to a stable equilibrium  for each attempted initial condition, a number $S$ between $0$ and $1$ which
is a measure for the size of the volume of the attracting phase space, can be obtained.
Since this technique is computationally demanding and also labels solutions which make large
excursions through phase space as stable \cite{kurths2}, as they do belong to the stable manifold of the equilibrium, 
we will follow a different approach. 

The stability region has been analyzed by Chiang \cite{hirsch1} and independently Zaborsky \emph{ et al.} \cite{zab1,zab2} 
and the direct method was developed by Varaiya, Wu, Chiang \emph{et al.} \cite{Varaiya,book}
to find a conservative approximation to the basin of stability. 

We define an energy function $E(\bm{\delta},\bm{\omega})$ by
\begin{align}
\hspace{-25pt}
E(\bm{\delta},\bm{\omega})&=\frac{1}{2}\sum_{i=1}^N{\omega_i^2}-\sum_{i=1}^N P_i\delta_i-K \sum_{i=1}^N(\cos(\delta_{i+1}-\delta_i))\nonumber\\
\hspace{-25pt}&=\frac{1}{2}\sum_{i=1}^N{\omega_i^2}+V(\bm{\delta}),
\label{eq:energy}
\end{align}
where we defined the potential $V(\bm{\delta})$ as
\begin{align}
V(\bm{\delta})=-K\sum_{i=1}^N \cos(\delta_{i+1}-\delta_i)-\sum_{i=1}^N P_i \delta_i.
\label{def:pot}
\end{align}
It can easily be shown that
\[
\frac{d E(\bm{\delta},\bm{\omega})}{dt}=-\alpha\sum_{i=1}^N\omega_i^2\,\,{\leq}\,\,0.
\]

The primary idea behind estimating the region of attraction of a stable equilibrium by the direct method, is that
this region is bounded by a manifold $\cal{M}$ of the type-1 equilibria that 
reside on the potential energy boundary surface (PEBS) of the stable equilibrium.
The PEBS can be viewed as the stability boundary of the associated gradient system \cite{chiang2,Varaiya}
\begin{align}
\frac{d\delta_i}{dt}=-\frac{\partial V(\bm{\delta})}{\partial \delta_i}.
\label{eq:reduced}
\end{align}
The \emph{closest equilibrium} is 
defined as the one with the lowest potential energy on the PEBS. 
By calculating the closest equilibrium with potential energy $V_{\rm min}$ and equating this to the total energy,
it is guaranteed that points
within the region bounded by the manifold ${\cal{M}}=\{(\delta,\omega)|E(\delta,\omega)=V_{\rm min}\}$, will
always converge to the stable equilibrium point contained in ${\cal{M}}$. 

The idea of estimating  the region of stability by type-1 equilibria is probably best illustrated
by considering a simple example of a three-node network depicted in Fi-g.~\ref{fig:contour_3nodes}(a). We choose this network only 
for illustration purposes as this small three-node network allows direct evaluation.
%\ref{fig:3N}(a).%
 For this network  we set
$$P_1/K=0.125,P_2=-0.125,P_3/K=0$$ and $\alpha=0$. Equipotential curves were plotted in Fig.~\ref{fig:contour_3nodes}(b). 
The type-1 equilibria (saddles) are displayed as little circles and squares, numbered 1 to 6. It is clear that the type-1 equilibria indeed 
surround the stable equilibria which are
shown as local minima in the potential $V$. Equilibrium 1 is the closest equilibrium with the smallest potential
energy on the PEBS plotted
by a black dash-dotted line. A small
perturbation in the direction to saddle point 1, depicted by the red dashed curve 
leads to desynchronization, whereas a larger 
perturbation in a different direction (blue solid curve) eventually decays toward the
stable equilibrium point and hence the system
stays synchronized. This shows the conservativity of the direct method and the challenges in calculating the
region of stability, as it depends on both the direction and size of the perturbation. One approach to this
problem is to determine the so-called controlling unstable equilibrium point, which was developed by
Chiang \emph{et al.}\cite{hirsch1,chiang1}.
We will not consider this method here, but rather restrict ourselves to the potential energy of all the type-1 saddles on 
the PEBS. As displayed in Fig. \ref{fig:contour_3nodes}(b), 
there are two type-1 saddles corresponding to the absolute value $|\theta_{i}|$  
of a phase difference exceeding $\pi/2$. The potential energy of these two saddles are 
different and the one with smaller potential energy
is more susceptible
to perturbations. In the following study, all the equilibria are divided into two groups: (I) a group that corresponds 
to the phase difference $\theta_{i}$ exceeding $\pi/2$ with smaller energy and (II) the other group with larger energy. 
In Fig. \ref{fig:contour_3nodes}(b),
direct calculation shows that the saddles (1-3) constitute group I and (4-6) constitute group II. 

We remark that closest equilibrium 1 corresponds to the line connecting node 1 and 2 with the largest line load. 
This makes sense since the line with higher line 
load is easier to lose synchronization first. 

In subsection {\bf A}, we derive the analytical approximation of the potential energy of the equilibria on PEBS of
homogeneous model, for group I and group II respectively. In subsection {\bf B}, we present the numerical results
for the heterogeneous model.

\begin{figure}[ht]
\centering
% \sidesubfloat[]{\includegraphics[scale=0.5]{Three_nodes_ring.eps}} \qquad \qquad
% \sidesubfloat[]{\includegraphics[scale=0.25]{contour.eps}}
\includegraphics[scale=1.1]{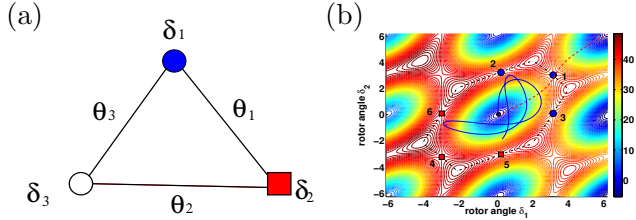}
\caption{{\small (a). A 3-node power grid. (b). The potential energy of the three nodes power grid as a function of $\delta_{i}$ 
where $P_{1}/K=0.125$, $P_{2}/K=-0.125$, and $P_{3}/K=0$. The 6 unstable equilibria are
local minima on the potential energy boundary surface (PEBS) plotted by the black dash-dotted line. 
The equilibrium 1 and 4, 2 and 5, 3 and 6 are caused by $\theta_{1}$, $\theta_{2}$ and $\theta_{3}$ exceeding $\pi/2$ respectively.  
Equilibrium 1 is
the closest equilibrium on the PEBS. The trajectory plotted by red dashed line
goes through saddle 3 and results in desynchronization after $\theta_{1}$ exceeded $\pi/2$. However, 
the trajectory plotted by the blue solid line always stays inside the
attraction of the stable equilibrium even though its energy is larger than the potential energy of saddle 1.}}
\label{fig:contour_3nodes}
\end{figure}

\subsection{Potential energy for homogeneous model}\label{subsubsec:analytical}

For the case of an $N$-node alternating ring network with $P_{i}$ distributed according to Eq.(\ref{eq:ALT}), we can easily find analytical expressions for the
potential energy by combining expressions for the equilibria (\ref{eqs:theta12}), the potential energy (\ref{def:pot}) and Eq.(\ref{eq:ALT}).
We assume that we are in a stable state, ${\bm \theta}_S^m=(\theta_1^m,\theta_2^m,\dots,\theta_1^m,\theta_2^m)$, that is,
$\theta_1^m,\theta_2^m\in[-\pi/2,\pi/2]$, which can always be achieved by a proper choice of $m$. For example $m=0$
corresponds to such a stable state.  The potential energy of the stable state ${\bm \theta}_S^m$ is
\begin{align}
\hspace{-15pt}
V_S^m&=-K\sum_{i=1}^N\cos(\theta_i^m)-\sum_{i=1}^NP(-1)^{i+1}\delta_i\nonumber\\
&=-\frac{KN}{2}\left[\cos(\theta_1^m)+\cos(\theta_2^m)\right]+\frac{NP}{4}\left[\theta_2^m-\theta_1^m\right].
\label{eq:p1}
\end{align}

We next consider the potential energy of the type-1 saddle points. According to Theorem \ref{thm:stability}, a type-1 saddle point 
corresponds to a link with absolute phase difference exceeding $\pi/2~(\rm{mod}~ 2\pi)$ in the network. 
We denote the type-1 saddle points
corresponding to the stable state ${\bm \theta}_S^m$ by 
$${\bm T}_j^m=(\hat{\theta}_1^m,\hat{\theta}_2^m,\dots,\pi-\hat{\theta}_1^m,\hat{\theta}_2^m,\dots,\hat{\theta}_1^m,\hat{\theta}_2^m).$$
 and
$${\bm {\bar{T}}}_j^m=(\hat{\theta}_1^m,\hat{\theta}_2^m,\dots,-\pi-\hat{\theta}_1^m,\hat{\theta}_2^m,\dots,\hat{\theta}_1^m,\hat{\theta}_2^m)$$
where the phase difference $\theta_j$ exceeds $\pi/2~(\rm{mod}~ 2\pi)$ and $j$ is odd. 
These two equilibria belong to group I and group II respectively.

In the following, we only focus on the type-1 saddle ${\bm T}_j^m$, the same results can be obtained for ${\bm {\bar{T}}}_j^m$.  

% The value $\hat{\theta}_1^m$
% of the $j-$th entry of ${\bm T}_j^m$ is replaced by $\pi-\hat{\theta}_1^m$ when $j$ is odd. When $j$ is even
% $\hat{\theta}_2^m$ is replaced by $\pi-\hat{\theta}_2^m$. This implies that the potential energy of the type-1 saddles
% surrounding ${\bm \theta}_S^m$ can be subdivided into two groups depending on $j$ being even or odd.

The equations that determine the values of $\hat{\theta}_1^m$ and $\hat{\theta}_2^m$ are now (\ref{t1a})
(with $\hat{\theta}_i^m$ substituted for $\theta_i^m$) combined with
\begin{align}
\left(\frac{N}{2}-2\right)\hat{\theta}_1^m+\frac{N}{2}\hat{\theta}_2^m=(2m-1)\pi.
\label{eq:thetm}
\end{align}

Hence we find that the type-1 saddles are implicitly given as solutions of the following equation
{\footnotesize{
\begin{align}
\hspace{-25pt}
\sin\left(\hat{\theta}_1^m-\frac{(2m-1)\pi}{N}-\frac{2\hat{\theta}_1^m}{N}\right)=\frac{P}{2K\cos\left(\frac{(2m-1)\pi}{N}+\frac{2\hat{\theta}_1^m}{N}\right)},
\label{eq:sad}
\end{align}
}}
which admits a solution $\hat{\theta}_1^m\in[0,\frac{\pi}{2}]$ when $$P/2K<\cos((2m-1)\pi/N+(2m+2\pi)/(N(N-2))).$$  We next argue that the
type-1 saddles found in Eq.~(\ref{eq:sad}) lie on the PEBS which surrounds the
stable equilibrium ${\bm \theta}_S^m$. One could use the same arguments as 
previously invoked  by De Ville\cite{Lee}. In Appendix C, we provide a more general proof which is valid for different $m$.

We set $m=0$ for the reasons described in remark II in section \ref{sec:linear} and denote ${\bm \theta}_S^0$ and ${\bm T}_{j}^0$ by $
{\bm \theta}_S$ and ${\bm T}_{j}$ respectively. We remark that there
are $2N$ type-1 equilibria on the PEBS of ${\bm \theta}_{S}$ if $P/2K$ is sufficiently large and 
each line load is $P/2$ for the equilibrium ${\bm \theta}_S$ 
as Fig.~(\ref{fig:power_transport})(a) shows.

We proceed to calculate the potential energy differences (details given in 
Appendix B) between the stable state ${\bm \theta}_S$ and the
saddle ${\bm T}_j$ for $j$ odd, which we call $\Delta V_I$
\begin{align}
\hspace{-25pt}
\Delta V_I&=-\frac{KN}{2}\left[\cos\hat{\theta}_1^0-\cos\theta_1^0\right]-\frac{KN}{2}\left[\cos\hat{\theta}_2^0-\cos\theta_2^0\right]\nonumber\\
\hspace{-25pt}&+2K \cos\hat{\theta}_1^0-\frac{NP}{4}\left[\hat{\theta}_1^0-{\theta}_1^0\right]+\frac{NP}{4}\left[\hat{\theta}_2^0-{\theta}_2^0\right]\nonumber\\
\hspace{-25pt}&+(-\pi/2+\hat{\theta}_1^0) P.
\label{eq:VI}
\end{align}
We can recast Eq.~(\ref{eq:VI}), using Eq.~(\ref{t1a}), in the following form
\begin{align}
\hspace{-15pt}
\Delta V_I&=P\left(-\frac{\pi}{2}+\arcsin\frac{P}{2K}\right)+\sqrt{4K^2-P^2}+\Delta U_I,
\label{eq:dv1}
\end{align}
where $\Delta U_I$ can be proven positive and has the asymptotic form for large $N$
\begin{align}
\hspace{-25pt}
\Delta U_I &=\frac{1}{N}\left(\frac{\pi}{2}-\arcsin\frac{P}{2K}\right)^2\sqrt{4K^2-P^2}+O\left(N^{-2}\right).
\label{eq:du1}
\end{align}
For ${\bm {\bar{T}}}_j^0$, a similar calculation shows that the 
potential energy difference can be expressed as
\begin{align}
\hspace{-15pt}
\Delta V_{II}&=P\left(\frac{\pi}{2}+\arcsin\frac{P}{2K}\right)+\sqrt{4K^2-P^2}+\Delta U_{II},
\label{eq:dv2}
\end{align}
where $\Delta U_{II}$ can be proven positive and has the asymptotic form for large $N$
\begin{align}
\hspace{-25pt}
\Delta U_{II} &=\frac{1}{N}\left(\frac{\pi}{2}+\arcsin\frac{P}{2K}\right)^2\sqrt{4K^2-P^2}+O\left(N^{-2}\right).
\label{eq:du2}
\end{align}
%Although the formulas that we derived so far are applicable for any value of $m$, we will now set $m=0$
%as this is the desirable state for synchronized network operation. In this state we find that the
%power transfer is a shown in Fig.~8(a).

We remark for the case $j$ is even, the derivation of the potential energy differences is analogous.

From the expression for the energy barriers $\Delta V_I$ and $\Delta V_{II}$, we can easily infer that 
as the line load $\mathcal{L}=P/2$ increases, $\Delta V_{1}$ decreases and $\Delta V_{II}$ increases. As mentioned before,
$\Delta V_{I}$ is more susceptible to disturbances. 

Furthermore, we can immediately draw
the conclusion that for large network sizes, $\Delta V_I$ and $\Delta V_{II}$ approach a limiting value
that depends only on $K$ and $P$ which can be observed in Fig.~\ref{fig:potential_heter}(a).
A direct calculation shows that the asymptotic limits correspond
exactly to a potential difference found for a tree network, which is sketched in Fig.~\ref{fig:power_transport}(b).
We remark that the line load of each line in the ring network and line 1 in the four nodes tree network are both $P/2$. 
Indeed, we find that for the line in a tree network with line load $P/2$, the energy leading it to desynchronization are 
$\Delta V_I^T$ and $\Delta V_{II}^T$ \cite{book}
\begin{eqnarray}
\hspace{-25pt}
 &&\Delta V_{I}^{T}=\frac{P}{2}\big(-\pi+2\arcsin{\frac{P}{2K}}\big)+\sqrt{4K^{2}-P^{2}},\nonumber\\
 \hspace{-25pt} &&\Delta V_{II}^{T}=\frac{P}{2}\big(\pi+2\arcsin{\frac{P}{2K}}\big)+\sqrt{4K^{2}-P^{2}}.
\label{eq:treepot}
\end{eqnarray}

Hence the energy barrier in Eq.~(\ref{eq:dv1}) and (\ref{eq:dv2}) can be explained in terms 
$\Delta V_{I}=\Delta V_{I}^{T}+\Delta U_{I}$ and $\Delta V_{II}=\Delta V_{II}^{T}+\Delta U_{II}$. 
As $\Delta U_I$ and $\Delta U_{II}$ are always positive, 
the energy needed to make a line lose synchronization (exceeding $\pi/2~(\rm{mod}~ 2\pi)$) is increased for the line 
in a ring network compared with in
a tree network. In other words,  
the line with line load $\mathcal{L}=P/2$ in the ring network is more 
robust than in a tree network. This permits the line in cycles to transport more power. 
A ring topology will result in an increased stability of the synchronous state compared to that of
a tree network. This effect is larger for smaller networks. This finding corroborates the
results by Menck \emph{et al.}~\cite{menck2}, who found decreased stability from dead-ends or, small trees in the network. 

In order to examine the robustness of our results, we next perform numerically studies on the networks with the random configuration of
consumers and generators as in Eq.(\ref{eq:pres}).

%\floatsetup[figure]{style=plain,subcapbesideposition=top}
\begin{figure}[ht]
\centering
% \sidesubfloat[\quad]{\includegraphics[scale=0.3]{P_M.eps}}
% 
% \sidesubfloat[\quad]{\includegraphics[scale=0.3]{Energy_LOWER.eps}}
% \sidesubfloat[\quad]{\includegraphics[scale=0.3]{Energy_UPPER.eps}}
\includegraphics[scale=1.1]{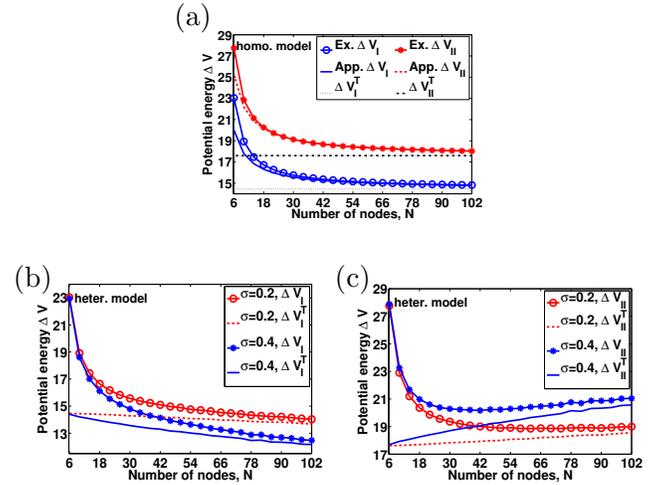}
\caption{{\small (a) The potential energy $\Delta V_{I}$, $\Delta V_{II}$, $\Delta V_{I}^{T}$ and
$\Delta V_{II}^{T}$ as functions of $N$ for the homogeneous model.
 The approximate value of $\Delta V_{I}$ and $\Delta V_{II}$ are calculated ignoring the term $O(N^{-2})$ in Eq. (\ref{eq:du1}) 
 and Eq.(\ref{eq:du2}) respectively.
 (b). The average value of $\Delta  V_{I}, \Delta V_{I}^{T}$ of heterogeneous model as
 functions of $N$ with $\sigma=0.2,0.4$. (c). The average value of $\Delta V_{II},\Delta  V_{II}^{T}$ of heterogeneous model as
 functions of $N$ with $\sigma=0.2,0.4$. $P=1$ and $K=8$ in all panels.}}
\label{fig:potential_heter}
\end{figure}

\begin{figure}[ht]
\centering
%  \sidesubfloat[]{\includegraphics[scale=0.3]{mean_Line_load.eps}}
%  \sidesubfloat[]{\includegraphics[scale=0.3]{Line_Load_dis.eps}}
\includegraphics[scale=1.1]{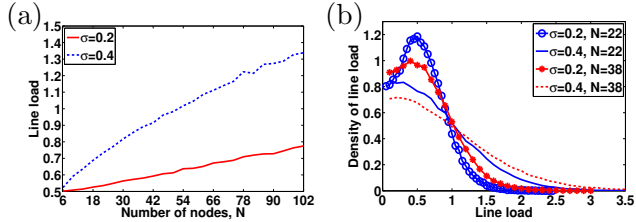}
\caption{{\small (a) The average line load as a function of $N$ for the heterogeneous model with $\sigma=0.2,0.4$.
(b) The distribution of line loads of cyclic power grids with $N=22, 38$ and $\sigma=0.2,0.4$. The distribution widens both
for increasing values of $N$ and $\sigma$. $P=1$ and $K=8$ in all panels.}}
\label{fig:line_load_dis}
\end{figure}

\subsection{Numerical results for the heterogeneous model}\label{subsubsec:numerical}

From the analysis of the nonlinear stability of cyclic power grids in homogeneous model, we know that 
the potential energy differences between the  type-1 equilibria
and the stable synchronous state with $m=0$, is always larger than the potential energy differences for a tree like network
with the same line load $\mathcal{L}$. Moreover, the potential energy barrier of the ring network approaches that 
of the tree network as $N$ increases. In the following, we verify whether this
remains true for cyclic power grids with heterogeneous distribution of $P_{i}$ and study how the heterogeneity
of power distribution influences the nonlinear stability.

\begin{figure}[ht]
\centering
% \sidesubfloat[]{\includegraphics[scale=0.3]{Energy_N10_density.eps}}
% \sidesubfloat[]{\includegraphics[scale=0.3]{Energy_N18_density.eps}}
%  
% \sidesubfloat[]{\includegraphics[scale=0.3]{Energy_upper_lower_tree_50.eps}}
% \sidesubfloat[]{\includegraphics[scale=0.3]{Energy_upper_lower_tree_102.eps}}
%  
% \sidesubfloat[]{\includegraphics[scale=0.3]{Energy_upper_lower_22.eps}}
% \sidesubfloat[]{\includegraphics[scale=0.3]{Energy_upper_lower_38.eps}}
\includegraphics[scale=1.1]{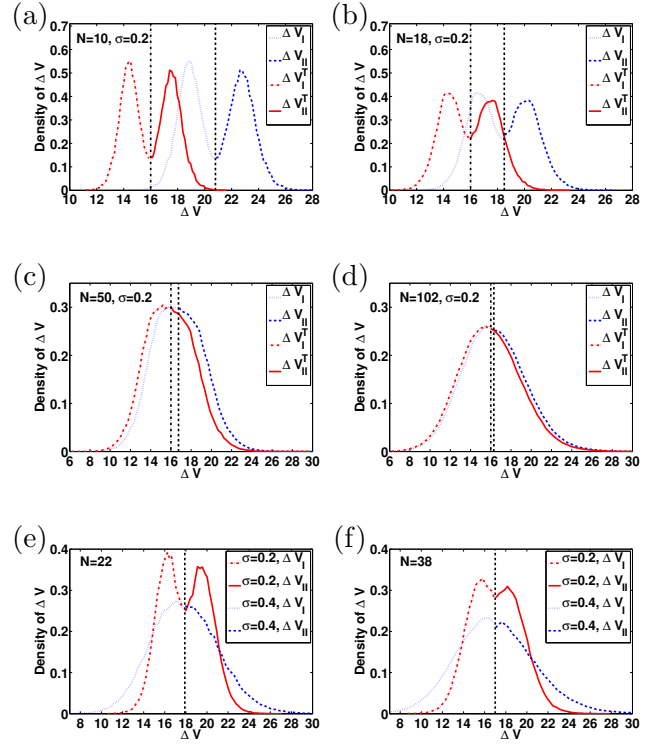}
\caption{{\small (a-d). The distribution of $\Delta V_{I}$, $\Delta V_{II}$, $\Delta V_{I}^{T}$ and $\Delta V_{II}^{T}$ of cyclic power grids
 for $N=10,18,50,102$. 
 (e-f). The distribution of $\Delta V_{I}$ and $\Delta V_{II}$ for $\sigma=0.2$, $0.4$ with $N=22, 38$.
 The black dash-dotted lines in the middle of figures (a-d) denote the boundary between $\Delta V_{I}^{T}$ and $\Delta V_{II}^{T}$ and 
 the black dashed lines in the middle of figures (a-f) indicate the boundary between $\Delta V_{I}$ and $\Delta V_{II}$. 
 $P=1$ and $K=8$ in all panels.}}
 \label{fig:potential_distribution}
\end{figure}

% \begin{figure}
%  \begin{subfigure}{0.5\linewidth}
%   \includegraphics[scale=0.25]{Energy_N10_density.eps}
%  \end{subfigure}
% \end{figure}

We next focus on how the potential energy of type-1 equilibria changes as $N$ increases. As we remarked in the previous subsection,
there are two groups of type-1 equilibria on the PEBS of ${\bm \theta}_{S}$, each having a
different potential energy relative to the synchronous state, $\Delta V_{I}$ and $\Delta V_{II}$, respectively.

As we do not have analytical expressions for $\Delta V_{I}$ and $\Delta V_{II}$ in this case,
we numerically compute these values for different values of $\sigma>0$ using the same procedure
for assigning values to $P_i$ as in Eq.~(\ref{eq:pres}).
For different values of $N$, we perform 2000 runs to calculate $\Delta V_{I}$ and $\Delta V_{II}$ and compute
the ensemble average. To determine which type-1 equilibria are on the PEBS of ${\bm  \theta}_{S}$, the numerical
algorithm proposed by Chiang \emph{et al.}~\cite{hirsch1} is used.

Since $\sigma$ is nonzero, incidentally a large value of $P_i$ can be assigned to a node, which prevents the existence of 
a stable equilibrium. Such runs will not be considered in the average.
Neither are runs in which fewer than  $2N$ type-1 equilibria are found on the PEBS.

In our numerical experiments, we  set again
$K=8, P=1, P/K=0.125$ and vary $N$ between 6 and 102 and set either  $\sigma=0.4$ or $\sigma=0.2$.

%the number of equilibria found is always not larger than $2N$. However, there are 216 runs that are given up for %$N=6, \sigma=0.4$ since
%the number of type-1 equilibria found is less than $12$. With $N>6$, no runs are given up. This does not %influence
%the main result of statistical analysis.

We determine the potential differences $\Delta V_{I}^T$ and $\Delta V_{II}^T$ by first calculating the
stable equilibria ${\bm {\theta}}_S$. As ${\bm {\theta}}_S$ determines all phase differences, it facilitates computing the line
loads between all connected nodes. From the line loads we subsequently extract the value
of $P$ which we then substitute into Eq.~(\ref{eq:treepot}) to find $\Delta V_{I}^{T}$ and
$\Delta V_{II}^{T}$, respectively.

By considering the average values of the quantities $\Delta V_{I}$, $\Delta V_{II}$, $\Delta V_{I}^{T}$, $\Delta V_{II}^{T}$,
we conclude the following.

First, for the heterogeneous distribution of $P_{i}$, 
the average value of $\Delta V_{I}$ and $\Delta V_{I}^{T}$ decreases with $N$ as shown in Fig.~\ref{fig:potential_heter}(b)
and Fig.~\ref{fig:potential_heter}(c). 
This is because the average line load increases with $N$ as shown in Fig.~\ref{fig:line_load_dis}(a) and 
$\Delta V_{I}$ and $\Delta V_{I}^{T}$ are monotonously increasing functions of the line load and $N$. 
However, $\Delta V_{II}$ decreases first and then 
increases after reaching a minimum with $N$ since it is a monotonously increasing function of line load but
a decreasing function of $N$. 
$\Delta V_{II}^T$ always increases since it is a monotonously increasing function of the line load. 

Second, for larger $\sigma$, $\Delta V_{I}$ decreases faster 
and $\Delta V_{II}$ increases faster after reaching a minimum. 
Since $\Delta V_{I}$ determines the stability more than $\Delta V_{II}$,  
the grid becomes less stable as $\sigma$ increases. So cyclic power grids with homogeneous distribution of $P_{i}$ as in 
Eq.~(\ref{eq:ALT}) are more stable than the ones with heterogeneous distributed $P_{i}$ as in Eq.~(\ref{eq:pres}). 

Third, $\Delta V_{I}$ and $\Delta V_{II}$ are always larger than $\Delta V_{I}^{T}$ and $\Delta V_{II}^{T}$, respectively and the
former two converge to the latter two
as $N$ increases, which is consistent with the homogeneous case. This confirms that the line in a cyclic grid is
more difficult to lose synchronization after a large perturbation
than in a tree grid. As the size $N$ of the cycle increases, this advantage disappears gradually. 

In order to get more insight in these scenario. The distribution of $\Delta V_{I}, \Delta V_{II}, \Delta V_{I}^{T}$ and $\Delta V_{II}^{T}$ 
are plotted in Figs.~\ref{fig:potential_distribution} for different $N$ and $\sigma$. 

The distribution of $\Delta V_{I}$ and
$V_{II}$ converge to $\Delta V_{I}^{T}$ and $\Delta V_{II}^{T}$, respectively, which can be observed
from Figs.~\ref{fig:potential_distribution}(a-d). There is a boundary between $\Delta V_{I}$ and
$\Delta V_{II}$ plotted by black dashed line in the middle of Figs.~\ref{fig:potential_distribution}(a-f). The boundary
actually is the upper bound of $\Delta V_{II}$ and lower bound of $\Delta V_{I}$, which is close to $2K+\frac{K\pi^{2}}{2N}$ calculated by
setting $P=0$ in Eqs.~(\ref{eq:dv1}) or (\ref{eq:dv2}). This does not depend on $\sigma$, as can be verified
in Figs.~\ref{fig:potential_distribution}(e-f).
For the tree connection, the boundary of $\Delta V_{I}^{T}$ and $\Delta V_{II}^{T}$ plotted by the black dash-dotted line  in
the middle of Figs.~\ref{fig:potential_distribution}(a-d) equals $2K$ calculated by setting $P=0$ in
Eqs.~(\ref{eq:treepot}).

Figs.~\ref{fig:potential_distribution}(e-f) show that the distribution of $\Delta V_{I}$ and
$\Delta V_{II}$ becomes broader as either $N$ or $\sigma$ increases.
This is
also reflected in the distribution of the line loads shown in Fig.~\ref{fig:line_load_dis}(b). We remark that
for the heterogeneous case, the line loads are different and the lines with
smaller line load become stronger while the ones with larger line load becomes weaker. In other word,
the power grid become stronger against some large disturbances while it becomes weaker against others.
As whatever $N$ or $\sigma$ increases, more lines become weaker which makes the network less stable against
various disturbances.

The maximum value of the density of potential energy is much smaller
than that of the linear stability as shown in Figs.~\ref{fig:eigen}(c-d). This demonstrates
that the potential energy is much more sensitive to the heterogeneity than the linear stability.

\section{Conclusion}\label{sec.conclusion}

Synchronization and their stability in cyclic power grids have been studied in this paper.
We obtained an analytical solution for the number
of stable equilibria of homogeneous cyclic power grids. The number of stable equilibria
increases linearly with the size of cyclic power grids. For cyclic power
grids with heterogeneous distribution of power generation and consumption, the existence of equilibria
has been analyzed with an efficient algorithm for finding all the type-1 equilibria.
%This allowed us to numerically verify the
%theoretical result of \cite{Araposthatis1981} which states that
%there exist demand vectors that can be met by unstable equilibria but
%by no stable equilibria.
Both the linear stability and nonlinear stability are investigated. 
Heterogeneity slightly reduces the linear stability, but affects the
nonlinear stability much more strongly.
We measure the nonlinear stability of the cyclic power grids
by the potential energy difference between the type-1 equilibria on PEBS and the stable equilibrium, 
which only depends on the power flow, but not on the damping coefficient.
An analytical approximation of the potential energy difference is obtained for the cyclic grids 
with a homogeneous distribution of generators and consumers. Numerical studies on the nonlinear stability
have been performed for the cyclic power 
grids with heterogeneous distribution of generators and consumers. For both the homogeneous and 
the heterogeneous case, we find that
the ring-like connection is more stable than the tree-like connection. A line connecting two nodes in a ring network is more robust than
a corresponding line in a tree network carrying the same line load, which allows it transport more power in the ring network. 
However, 
the greater stability of the ring configuration diminishes with a large network size. 
Therefore, to benefit from the increased stability of a ring like connection,
the network size should not
be too large (typically $N<10$).  

Compared to the homogeneous case, in heterogeneous cyclic power grids,
some lines become more stable while others become less stable since the line load becomes more heterogeneous. 
Hence the overall stability decreases. 

In real power grids, the stability of power grids can be enhanced by improving
the topology which is very complex. With this motivation, Kurths \emph{et al.}~\cite{menck2,kurths2} has explored the single node
basin stability and the survivability respectively to measure the nonlinear stability of power grids. The critical link capacity has also been
studied in refs\cite{Lozano2012,dorfler} to improve the topology. The Kuramoto order parameter \cite{timme2,Skardal2014,kurths}
also has been used to measure the synchrony of power grids. An analytical approximation of the critical clearing time
\cite{Anderson,Kundur} of faults 
in power systems is
derived by Roberts \emph{et al.}~\cite{Roberts} which shows that larger potential energy of the closest equilibrium may increase 
the critical clearing time. 
The potential energy of type-1 equilibria measures the energy-absorbing
capability of real power grids. Hence it can be used to measure the nonlinear stability as this paper presents.
The challenge is on how to find all the type-1 equilibria of the
power systems. There might be some other approach to approximate the potential energy such as
this paper presents namely line load. We present that lines transmitting the same amount of power may have
different stability similar to the difference between
tree like network and ring like network. It is worthwhile to investigate the nonlinear stability of small size artificial power grids to obtain
some insights on improving the stability measured by the potential energy of type-1 equilibria.

\section{Acknowledgement}
We thank  Jakob van de Woude for interesting conversations and comments during our regular meetings
and we are extremely grateful to Jan H. van Schuppen for his interest, good suggestions and invaluable mathematical help.

\appendix

\small{

\section{\large {Stable equilibria and type-1 saddles of cyclic power grids}}
 \begin{prop}
 The total number of stable equilibria in a ring network with homogeneous distribution of generation and consumption as in Eq.(\ref{eq:ALT})
is given
by  $N_s=1+2\lfloor \frac{N}{\pi}\arccos\left(\sqrt{\frac{P}{2K}}\right)\rfloor$.
 \end{prop}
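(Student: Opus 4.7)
The plan is to count integer parameters $m$ for which the parametric equilibria given by (\ref{eqs:theta12}) satisfy the stability condition already established by Theorem~\ref{thm:stability}. By that theorem every stable equilibrium of the homogeneous model must have all phase differences $\theta_i$ in $[-\pi/2,\pi/2]~(\mathrm{mod}~2\pi)$, and conversely any equilibrium of the form (\ref{eqs:theta12}) satisfying this condition is stable. Since (\ref{eqs:theta12}) exhausts all such candidates indexed by $m\in\{-\lfloor N/2\rfloor,\dots,\lfloor N/2\rfloor\}$, the counting reduces to determining for which $m$ both $\theta_1^m$ and $\theta_2^m$ lie in $[-\pi/2,\pi/2]$.

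Writing $\phi_m=2\pi m/N$ and $\alpha_m=\arcsin\!\bigl(\tfrac{P}{2K\cos\phi_m}\bigr)$, Eqs.~(\ref{t1a})--(\ref{t1b}) read $\theta_1^m=\phi_m+\alpha_m$ and $\theta_2^m=\phi_m-\alpha_m$. I would first require realness of $\alpha_m$, which forces $\cos\phi_m\geq P/(2K)$ (so automatically $\cos\phi_m>0$ and $\alpha_m\geq 0$). Then $\theta_1^m,\theta_2^m\in[-\pi/2,\pi/2]$ is equivalent to $|\phi_m|+\alpha_m\leq \pi/2$. Using $\alpha_m\geq 0$ and $|\phi_m|\leq \pi/2$, this is equivalent to $\sin(\pi/2-|\phi_m|)\geq \sin\alpha_m$, i.e.\ to
\begin{equation*}
\cos\phi_m\;\geq\;\frac{P}{2K\cos\phi_m},\qquad\text{i.e.}\qquad \cos^{2}\phi_m\;\geq\;\frac{P}{2K}.
\end{equation*}
This single clean inequality supersedes the weaker realness condition.

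The last step is a counting argument: $\cos^{2}(2\pi m/N)\geq P/(2K)$ is equivalent (on the principal branch) to $|2\pi m/N|\leq \arccos\!\bigl(\sqrt{P/(2K)}\bigr)$, i.e.\ $|m|\leq \tfrac{N}{2\pi}\arccos\!\bigl(\sqrt{P/(2K)}\bigr)$. The number of integers $m$ satisfying this symmetric bound is $1+2\lfloor \tfrac{N}{2\pi}\arccos(\sqrt{P/(2K)})\rfloor$, which is the claimed count (matching Eq.~(\ref{eq:numbstab}) in the body of the paper).

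The routine parts are the trigonometric manipulation and the integer-counting; the only conceptually delicate point is the reduction to the inequality $\cos^{2}\phi_m\geq P/(2K)$, where one must correctly take the sign of $\alpha_m$ and exploit that the stability constraint itself forces $|\phi_m|\leq \pi/2$ so that $\cos\phi_m>0$ and the $\arcsin$ is unambiguously defined. The main obstacle I anticipate is therefore bookkeeping at the boundary values (ensuring that boundary $m$ with $\cos\phi_m=\sqrt{P/(2K)}$ are correctly counted by the floor function, and checking that different $m$ produce genuinely distinct equilibria, which follows because $\phi_m$ determines the mean $(\theta_1^m+\theta_2^m)/2$ uniquely modulo $2\pi$ over the admissible range).
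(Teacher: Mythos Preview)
Your proposal is correct and follows essentially the same route as the paper: both impose $\theta_1,\theta_2\in[-\pi/2,\pi/2]$, rewrite this as $\arcsin\!\bigl(P/(2K\cos\phi_m)\bigr)\leq \pi/2-|\phi_m|$, take $\sin$ of both sides using monotonicity on $[0,\pi/2]$ to obtain $\cos^{2}\phi_m\geq P/(2K)$, and then count the admissible integers $m$. Your write-up is simply more explicit about the bookkeeping (signs, realness, distinctness) than the paper's ``using some trigonometry'', and you correctly land on the formula $1+2\lfloor \tfrac{N}{2\pi}\arccos(\sqrt{P/(2K)})\rfloor$ matching Eq.~(\ref{eq:numbstab}); the $\pi$ in the Appendix proposition statement is a typo.
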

\begin{proof}
% From expressions (\ref{eqs:theta12}) we find  solutions are only possible when
% $$-1\,{\leq}\,\frac{P}{2K\cos(2\pi m/N)}\,{\leq}\,1$$
% with $m\in H=\{-\lfloor N/2\rfloor,\dots,-1,0,1,\dots,\lfloor N/2\rfloor\}$.
% restricting to positive values of $m$, we find that the inequality above implies
% $$\arccos\left(\frac{P}{2K}\right)\,{\leq}\,\frac{2\pi m }{N},$$
% from which the number of stable equilibria $N_E$ immediately follows, by multiplying with $2$,
% to account for negative values of $m$ and adding 1 for the $m=0$ contribution.
To determine the number of  stable equilibria we require $\theta_1,\theta_2\in[-\pi/2,\pi/2]$.
Taking $$m\in H=\{-\lfloor N/2\rfloor,\dots,-1,0,1,\dots,\lfloor N/2\rfloor\}$$ and restricting to positive values of $m$ , 
we find the following inequality
$$\arcsin\left(\frac{P}{2K\cos(2 \pi m/N)}\right)\,\leq\,\frac{\pi}{2}-\frac{2 m \pi}{N}.$$
As $\sin x$ is a monotonic and positive function for $x\in[0,\pi/2]$, the inequality holds true
when taking $\sin$ on both sides. Using some trigonometry we arrive at the stated result, after
accounting for negative values of $m$ by multiplying with $2$ and adding $1$ to account for the $m=0$ term.
\end{proof}

\begin{thm}
The matrix $L$ has nonpositive eigenvalues if and only if  $|\delta_i-\delta_{i-1}|\,{\leq}\,\frac{\pi}{2}~(\rm {mod}~2\pi)$ for $i=1,\dots,N$.
A single phase difference $|\delta_{i}-\delta_{i-1}|\,{\geq}\,\pi/2 ~~(\rm {mod}~2\pi)$ will result
in one positive eigenvalue. If there are more than one phase 
differences $|\delta_{i}-\delta_{i-1}|\,{\geq}\,\pi/2~~(\rm {mod}~2\pi)$, the number of positive eigenvalues of $L$ is larger than 1. 
\label{thm:matrxi_L}
\end{thm}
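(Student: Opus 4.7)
The plan is to exploit the quadratic form representation $-L=\sum_{i=1}^{N}c_{i}u_{i}u_{i}^{T}$ with $u_{i}=e_{i}-e_{i-1}$ (cyclic indices mod $N$). Two facts follow immediately: the vector $\mathbf{1}$ always lies in the kernel of $L$ (uniform phase shift symmetry), and the remainder of the spectrum lives on the hyperplane $\mathbf{1}^{\perp}$, so it suffices to analyze the inertia of $L$ restricted there. The easy direction of the first claim is then transparent: if every $c_{i}\ge 0$ then $-L$ is a nonnegative combination of rank-one PSD matrices, hence PSD, and so $L$ has only nonpositive eigenvalues.

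For the converse and for the single-negative-weight count, I would use a rank-one update argument. Suppose exactly one weight, say $c_{k}$, becomes negative. Split $L=L_{\mathrm{path}}-c_{k}u_{k}u_{k}^{T}$ with $L_{\mathrm{path}}=-\sum_{i\neq k}c_{i}u_{i}u_{i}^{T}$ the Laplacian of the path obtained by removing edge $k$; since the remaining weights are positive, $L_{\mathrm{path}}$ is negative definite on $\mathbf{1}^{\perp}$. The perturbation $(-c_{k})u_{k}u_{k}^{T}$ is PSD of rank one, so by Cauchy interlacing only the top eigenvalue of $L|_{\mathbf{1}^{\perp}}$ can leave the negative half-line. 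The Weinstein--Aronszajn / Sherman--Morrison criterion pinpoints precisely when it does: a positive eigenvalue appears iff $|c_{k}|\,u_{k}^{T}(-L_{\mathrm{path}}|_{\mathbf{1}^{\perp}})^{-1}u_{k}>1$, and the bilinear form on the left equals $|c_{k}|\sum_{i\neq k}c_{i}^{-1}$ since $u_{k}^{T}(-L_{\mathrm{path}})^{+}u_{k}$ is the effective resistance between nodes $k-1$ and $k$ along the remaining path (a series of resistors $1/c_{i}$).

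For more than one index with $c_{i}<0$, I would iterate the above procedure, flipping one negative weight at a time and arguing via the same secular equation that each flip adds one fresh positive eigenvalue while the previously created positive eigenvalues remain positive (a monotonicity plus continuity argument). A more global alternative is to invoke the signed-Laplacian inertia theorem of Bronski and De Ville~\cite{bronski}, which for a single cycle translates directly into the desired count in terms of the number of negative weights and the parity of the product of signs around the loop.

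\textbf{Main obstacle.} The delicate point is verifying that the effective-resistance inequality $|c_{k}|\sum_{i\neq k}c_{i}^{-1}>1$ actually holds for the matrices $L$ arising from equilibria of the alternating homogeneous model; for purely arbitrary $c_{i}$ the rank-one perturbation can be too weak to push an eigenvalue across zero, so the equilibrium structure is genuinely used. Tracking strict monotonicity of the positive-eigenvalue count across several simultaneous sign flips, without allowing accidental cancellations between the rank-one perturbations, is the technically demanding step and is where either a careful ordering of the flips or the full Bronski--De Ville machinery is needed.
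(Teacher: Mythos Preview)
Your skeleton coincides with the paper's: both write the passage from the all-positive-weight Laplacian to the flipped one as a rank-one PSD perturbation and invoke Weyl to cap the number of eigenvalues that can cross zero. Where you diverge is in certifying that one eigenvalue \emph{does} cross. The paper does this by brute force: it forms the reduced $(N-1)\times(N-1)$ matrix $L'_{\rm red}$ for the specific alternating pattern $c_i\in\{a,b\}$, evaluates $\det L'_{\rm red}$ via a recurrence, and reads off the sign of the resulting polynomial in $a,b,N$. For two flipped edges it repeats the exercise with a second explicit determinant. Your secular-equation/effective-resistance route is more conceptual and portable, and in fact your ``main obstacle'' dissolves here: for a flip at an $a$-edge the criterion $|c_k|\sum_{i\neq k}c_i^{-1}>1$ becomes $(N/2-1)+(N/2)\,a/b>1$, which is immediate for $N\ge 4$ since $a,b>0$. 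So the one-flip case closes in one line rather than a determinant recursion, and this is a genuine simplification over the paper.

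The gap is the multi-flip count. Iterating rank-one PSD updates shows that already-positive eigenvalues stay positive, but it does \emph{not} by itself produce a \emph{second} positive eigenvalue: after the first flip the base matrix is indefinite on $\mathbf{1}^\perp$, the Sherman--Morrison threshold must be recomputed against an indefinite background, and the effective-resistance interpretation no longer applies. The paper sidesteps this by computing $\det L''_{\rm red}$ explicitly for two flips (again a closed-form polynomial in $a,b,N$, shown negative), and then observes that further PSD perturbations preserve those two positive eigenvalues. Your deferral to Bronski--De Ville is legitimate as a citation, but if you want a self-contained argument in your style you still owe the two-flip step; iterated Sherman--Morrison alone will not deliver it without additional structure.
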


\begin{proof}
The proof that all eigenvalues of $L$, are nonpositive can easily be established from Gershgorin's
circle theorem. We now prove that each phase difference exceeding $\pi/2~(\rm{mod}~ 2\pi)$ leads to
a positive eigenvalue. We will use matrix theory and Weyl's inequality to prove this, which
is in the same spirit as the proof of De Ville~\cite{Lee} in the case of a single frequency
Kuramoto network. We use the notation $\bm{\theta}^m_S$ to denote an \emph{stable} equilibrium with a fixed value
for $m$, chosen such that $\theta_1^m$, $\theta_2^m$ in the interval $[-\pi/2,\pi/2]$. The vector $\bm{\theta}^m_S$ reads in components
$(\theta_1^m,\theta_2^m,\dots,\theta_1^m,\theta_2^m)$. An equilibrium which has a single phase difference, say
between node $j$ and $j-1$, exceeding $\pi/2~(\rm{mod}~ 2\pi)$ will be denoted by
${\bf T}_j^m=(\theta_1^m,\theta_2^m,\dots,\pi-\theta_1^m,\dots\theta_1^m,\theta_2^m)$. Depending on the $j$ being odd or
even, a $\theta_1^m$ or a $\theta_2^m$ is replaced by $\pi-\theta_1^m$ or $\pi-\theta_2^m$, respectively.
As the system is rotationally symmetric we might as well choose the first phase difference to be larger than $\pi/2$.

In matrix language this implies that the Laplacian matrix $L$, which takes the following form in the case
with all phase differences  restricted to $[-\pi/2,\pi/2]$
{\footnotesize{
\begin{align}
\hspace{-25pt}
L=\left(\begin{array}{cccccc} -a-b & a & 0 & \dots& 0  & b \\
                            a & -a-b & b & 0& \dots & 0\\
                            0 & b & -a-b & a & 0 & \vdots \\
                            \vdots & \ddots & a & -a-b &  b & \vdots\\
                            0 &  \ddots & 0& b & -a-b & a \\
                            b & 0 & \dots & 0 & a & -a-b \end{array}\right),
\end{align}
}
}
with $a=K\cos(\theta_1^m)>0$ and $b=K\cos(\theta_2^m)>0$, will be transformed to the  Laplacian matrix $L'$:
{\footnotesize{
\begin{align}
\hspace{-25pt}
L'=\left(\begin{array}{cccccc} a-b & -a & 0 & \dots& 0  & b \\
                            -a & a-b & b & 0& \dots & 0\\
                            0 & b & -a-b & a & 0 & \vdots \\
                            \vdots & \ddots & a & -a-b &  b & \vdots\\
                            0 &  \ddots & 0& b & -a-b & a \\
                            b & 0 & \dots & 0 & a & -a-b \end{array}\right),
\end{align}}
}
when going from the equilibrium state $\bm{\theta}^m_S$ to the equilibrium state ${\bf T}_j^m$.
Both matrix $L$ and $L'$ have an eigenvalue $\lambda=\lambda'=0$, with eigenvector ${\bm{1}}$. Furthermore, all
other eigenvalues of $L$, which are real due to symmetry of $L$ are negative, and therefore
we can order the eigenvalues of $L$ as $\lambda_0=0\,{\geq}\,\lambda_1\,{\geq}\lambda_2\dots$.
As matrix $L'=L+{\bf x}^T {\bf x}$, with ${\bf x}=a/\sqrt{2} [1~ -1~ 0 ~\dots~ 0]$, we can use Weyl's
matrix inequality to relate the eigenvalues of $L'$: $\{\lambda'\}$ to that of $L$ as follows:
\begin{align}
\lambda'_0\,{\geq}\,\lambda_0=0\,{\geq}\,\lambda'_1\,{\geq}\,\lambda_1.
\end{align}
Hence, at most one eigenvalue of $L'$ is negative.
We can proof that one of the $\lambda'$ is negative by calculating the determinant of the reduced $(N-1)\times(N-1)$ matrix that
results after removing the eigenvalue $0$. $L'$ and $L'_{\rm \small red}$ have the same spectrum apart from $\lambda=0.$
{\scriptsize
\begin{align}
\hspace{-25pt}
L_{\rm\small red}'=\left(\begin{array}{cccccc} 2a-b & a+b &  & \dots&   &  \\
                            b & -a-b & b & 0& \dots & 0\\
                            0 & b & -a-b & a & 0 & \vdots \\
                            \vdots & \ddots & a & -a-b &  b & \vdots\\
                            0 &  \ddots & 0& b & -a-b & a \\
                            -b & -b & \dots & -b & a-b & -a-2b \end{array}\right),                            
\end{align}
}

We find, using recurrence relations, that the determinant of reduced $L'$ for general values of $N>4$ (even) is given  by
\begin{align}
&det(L'_{\rm red})=a^{N/2}a^{-1}b^{-1}b^{N/2}\times\nonumber\\
 &\left[\frac{3}{2} b N^2 + 20 b N + 56 b + 14 a N + 40 a + a N^2\right].\nonumber
\end{align}
Since the term in square brackets is always positive and $N$ is even in our case, this implies that the
determinant of the reduced matrix with dimensions $(N-1)\times (N-1)$ is positive and hence there will indeed
be exactly one positive eigenvalue. For the case $N\,{\leq}\,4$ direct calculation of the determinant gives the result.

For the case when there are two phase differences exceeding $\pi/2~(\rm{mod}~ 2\pi)$, the same argument can be applied, now starting from the matrix
$L'$ and defining a new matrix $L''$ in a similar fashion as above. Both the case that two neighboring phase differences
exceed $\pi/2~(\rm{mod}~ 2\pi)$ and the case that the two links with a phase difference ${\geq}\,\pi/2$ are separated by an odd number of 
nodes must be treated. Here we shall present the case for two non neighboring phase differences exceeding $\pi/2~(\rm{mod}~ 2\pi)$.
More precisely the equilibrium is $(\pi-\theta_1,\theta_2,\pi-\theta_1,\theta_2,\cdots,\theta_2)$. The matrix $L''$ is in this case
\begin{align}
\hspace{-25pt}
L''=\left(\begin{array}{cccccc} a-b & -a & 0 & \dots& 0  & b \\
                            -a & a-b & b & 0& \dots & 0\\
                            0 & b & a-b & -a & 0 & \vdots \\
                            \vdots & \ddots & -a & a-b &  b & \vdots\\
                            0 &  \ddots & 0& b & -a-b & a \\
                            b & 0 & \dots & 0 & a & -a-b \end{array}\right),
\end{align}
We remark again that the eigenvalues of $L'$ and $L''$ are related by  $L''=L'+{\bf x}^T {\bf x}$, with ${\bf x}=a/\sqrt{2} [0~ 0~ 1~ -1~ 0 ~\dots~ 0]$.
Using Weyl's matrix inequality we can proof that $L''$ will have 2 positive eigenvalues, when
the reduced $L''_{\rm\small red}$ matrix, which is constructed analogous to $L'_{\rm\small red}$, has a negative determinant. 
Using again recurrence relations, we can find $det(L''_{\rm \small red})$ and hence the product of the eigenvalues 
of $L''$ (with $\lambda=0$ removed)
\begin{align}
det~L''_{\rm red}&= -a^{N/2}b^{N/2}a^{-1}b^{-2}\left[(16 a^2 + 52 a b  - 12 b^2)\right. \\
&+\left. (-\frac{13}{2} a^2  - 20 ab + 3 b^2) N +
 \frac{3}{4} ( a^2 + 3 b a) N^2\right].\nonumber
\end{align}
This expression is valid for even $N$ and $N\,{\geq}\,4$, and is negative  definite, which shows that there will
be two positive eigenvalues in this case. For $N<4$, we can easily check our result by direct calculation.
The case with two neighboring nodes having phase differences exceeding $\pi/2~(\rm{mod}~ 2\pi)$ can be treated in a similar fashion.
In addition, it can be proven directly by 
Weyl's matrix inequality that $L$ has more than one positive eigenvalue if there are more than two phase differences
exceeding $\pi/2~(\rm {mod}~2\pi)$,
 \end{proof}

{\bf Remark} A beautiful proof using graph theory can be constructed as well, using the very general
approach of Bronski \emph{et al} \cite{bronski}. However, here we prefer direct calculation of the reduced determinant to show that a
direct calculation is feasible in this case.

\section{\large {Calculating the potential energy}}

Here we prove that the potential differences between the type 1-saddles $\bm{T}^m_j$ and stable equilibria $\bm{\theta}_S^m$ are
given by Eqs.~(\ref{eq:dv1}) and (\ref{eq:dv2})
with $\Delta U_{I}$ and $\Delta U_{II}$ positive and of the form Eq.~(\ref{eq:du1}) and (\ref{eq:du2}).

We will only consider $\Delta V_{I}$ as the proof for $\Delta V_{II}$ is analogous.
We start with rewriting Eq.~(\ref{eq:VI}) using the fact that we can write the linear term either $-PN\hat{\theta}_1/2$ 
or as $PN\hat{\theta}_2/2$. Choosing for the first form gives:
\begin{align}
\hspace{-25pt}
\Delta V_I&=-\frac{KN}{2}\left[\cos\hat{\theta}_1^m-\cos\theta_1^m\right]-\frac{KN}{2}\left[\cos\hat{\theta}_2^m-\cos\theta_2^m\right]\nonumber\\
&+2K \cos\hat{\theta}_1^m-\frac{NP}{4}\left[\hat{\theta}_1^m-{\theta}_1^m-\hat{\theta}_2^m+\theta_2^m\right]\nonumber\\
&+(-\pi/2+\hat{\theta}_1^m)P,
\label{ap1}
\end{align}

We next focus on the case $m=0$, as this is the state that we are most interested 
in and omit the index $m$ in the expressions. Moreover, we notice
that it is immediate from Eq.~(\ref{eq:sad}) and Eq.~(\ref{eq:thetm})  that
\begin{equation}
\hat{\theta}_1<\theta_1,~~~\hat{\theta}_2<\theta_2.
\end{equation}

%Therefore we can write ${\theta}_1=\hat{\theta}_1+\Delta_1$ and ${\theta}_2=\hat{\theta}_2+\Delta_2$,
%with $\Delta_1,\Delta_2>0$. After substituting these expression for $\theta_1$, $\theta_2$ in Eq.~(\ref{ap1})
We can recast Eq.~(\ref{ap1}) in the following form
\begin{align}
\Delta V_I&=\frac{KN}{2}[\cos\theta_1-\cos\hat{\theta}_1]+\frac{KN}{2}[\cos\theta_2-\cos\hat{\theta}_2]\nonumber\\
&+2K\cos\theta_1 +2K[\cos\hat{\theta}_1-\cos\theta_1]-\frac{NP}{4}[\hat{\theta}_1-\theta_1]\nonumber\\
&+P\left(\theta_1-\frac{\pi}{2}\right)+P(\hat{\theta}_1-\theta_1)+\frac{NP}{4}[\hat{\theta}_2-\theta_2].\nonumber\\
 %P-\frac{KN}{2}\left[\sin\Delta_1\sin\theta_1+\sin\Delta_2\sin\theta_2\right]
\label{ap2}
\end{align}
We next invoke the mean value theorem to write
\begin{align}
\cos\theta_1-\cos\hat{\theta}_1&=-(\theta_1-\hat{\theta}_1)\sin\xi_1,\nonumber\\
\cos\theta_2-\cos\hat{\theta}_2&=-(\theta_2-\hat{\theta}_1)\sin\xi_2,
\label{mv}
\end{align}
where $\xi_1\in[\hat{\theta}_1,\theta_1]$ and $\xi_2\in[\hat{\theta}_2,\theta_2]$
Realising that $\cos(\theta_1)=\sqrt{1-(P/2K)^2}$ and $\sin(\theta_1)=P/2K$, and using the mean value
expressions (\ref{mv}) we can rewrite Eq.~(\ref{ap2}) as
%\begin{align}
%\Delta V_I&=\sqrt{4K^2-P^2}
%K \frac{KN}{2}\cos\theta_1+\frac{KN}{2}\cos\theta_1(1-\cos\Delta_2)\nonumber\\
%&+\sqrt{4K^2-P^2}+\left(\theta_1-\frac{\pi}{2}\right){2P}+\sqrt{4K^2-P^2}(\cos\Delta_1-1)\nonumber\\
%&+P\sin\Delta_1+\frac{NP}{2}\Delta_1-2\Delta_1 P-\frac{NP}{4}[\sin\Delta_1-\sin\Delta_2].
%\label{ap3}
%\end{align}
%Eq.~(\ref{ap3}) can be decomposed as
\begin{align}
\Delta V_{I}=\sqrt{4K^2-P^2}+\left(\arcsin(P/2K)-\frac{\pi}{2}\right)P+\Delta U_I,
\label{ap3}
\end{align}
where we introduced $\Delta U_I$, which is defined as
\begin{align}
\hspace{-25pt}
\Delta U_I&=-\frac{KN}{2}\left[\left(1-\frac{4}{N}\right)(\theta_1-\hat{\theta}_1)\sin\xi_1+(\theta_2-\hat{\theta_2})\sin\xi_2\right]\nonumber\\
\hspace{-25pt}&+\frac{K N}{2}\left[\left(\frac{P}{2K}\right)\left[\left(1-\frac{4}{N}\right)(\theta_1-\hat{\theta}_1)-(\theta_2-\hat{\theta}_2)\right]\right].
\end{align}
Assuming $N>4$, we can use the fact that $\sin$ is an increasing function on $[-\pi/2,\pi/2]$ to
show that
%That $\Delta U_I\,{\ge}\,0$ can be seen by writing $\Delta U_I$ in the following form
\begin{align}
\Delta U_{I}&\,{\geq}\,\frac{KN}{2}(\theta_1-\hat{\theta}_1)\left[\sin\theta_1-\frac{P}{2K}\right]\nonumber\\
&+\frac{KN}{2}(\theta_2-\hat{\theta}_2)\left[\sin\theta_2+\frac{P}{2K}\right]=0,
\label{eq:ap4}
\end{align}
where we used in the last step that $\theta_2=-\theta_1$ and $\theta_1=\arcsin(\frac{P}{2K})$.

Consider Eq.~(\ref{eq:sad}) for large $N$, when $N\to\infty$,  we
obtain Eq-s.~(\ref{eqs:theta12}) for the  stable equilibria. We therefore try to
find an approximate expression for type-1 saddles which are
separated from the stable equilibrium by a distance that is of order $1/N$.
We write $\hat{\theta}_1=\Delta\theta_1+\theta_1$ where $\Delta\theta_1$ is of order $1/N$. We can immediately
read off from Eq.~(\ref{eq:sad}) that
\begin{align}
\Delta\theta_1=\frac{1}{N}\left(-\pi+2\arcsin\frac{P}{2K}\right)+O(1/N^2).
\label{ap6}
\end{align}
From (\ref{ap6}) we find that $\Delta \theta_2=\Delta \theta_1$ is in the order $1/N$. By substituting
this in the expression for $\Delta U_I$ we find expression (\ref{eq:du1}). A similar analysis applies to $\Delta V_{II}$ and $\Delta U_{II}$

\section{\large {Proof of type-1 saddles being on the PEBS}}

In order to prove the type-1 saddles $\bm{T}_j^0$ are on the PEBS of $\bm{\theta}_S^{0}$, we 
need to prove that the unstable manifold of $\bm{T}_j^0$ limits
on $\bm{\theta}_S^{0}$. De Ville \cite{Lee} has provided a proof. Here we provide a more general 
proof which is valid for different $m$. By setting $P=0$, in Eq.~(\ref{eq:reduced}) and using
$\theta_i=\delta_i-\delta_{i-1}$, we find the equation of $\theta_i$.
\begin{eqnarray}\label{eq:nonlinear_P0}
&& \dot{\theta}_{i}=-2K\sin{\theta_{i}}+K\sin{\theta_{i-1}}+K\sin{\theta_{i+1}}.
\end{eqnarray}
Eq.~(\ref{eq:nonlinear_P0}) coincides with the overdamped limit of Eq.~(\ref{eq:stelsel}).

We start our proof by considering a particular stable equilibrium: $\bm{\theta}_S^m=(\theta_1^m,\theta_1^m,\cdots,\theta_1^m,\theta_1^m)$ with
$\theta_{1}^{m}=\frac{2\pi m}{N}$. 
There are $2N$ nearby saddles with coordinates given
by $$\bm{T}_j^m=(\hat{\theta}_1^m,\hat{\theta}_1^m,\cdots,\pi-\hat{\theta}_1^m,\cdots,\hat{\theta}_1^m,\hat{\theta}_1^m)$$ with
$\hat{\theta}_1^m=\frac{(2m-1)\pi}{N-2}$
and $$\bm{\bar{T}}_j^m=(\hat{\theta}_1^m,\hat{\theta}_1^m,\cdots,-\pi-\hat{\theta}_1^m,\cdots,\hat{\theta}_1^m,\hat{\theta}_1^m)$$ 
with $\hat{\theta}_1^m=\frac{(2m+1)\pi}{N-2}$,
which coincides with $\bm{T}_j^{m+1}$. We only prove $\bm{T}_j^m$ is on the PEBS of $\bm{\theta}_S^m$ and the proof of  $\bm{\bar{T}}_j^m$
is analogous.

We define the region 
{\scriptsize{$$I=\Big\{(\theta_{1}^{m},\cdots, \theta_{N}^{m})|\theta_{i}^m\in \left[\hat{\theta}_1^m,\pi-\hat{\theta}_1^m\right], i=1,\cdots N,
\sum_{i=1}^{N}\theta_{i}^{m}=2\pi m \Big\}$$}}
%$I=\left[{\bf{T}}_1^m,{\bf\bar{ T}}_1^m\right]\times\left[{\bf{T}}_2^m,{\bf\bar{ %T}}_2^m\right]\times\cdots\times\left[{\bf{T}}_N^m,{\bf \bar{T}}_N^m\right]$ 
and show that $I$ is invariant under the dynamics
of Eq.~(\ref{eq:nonlinear_P0}). It can be easily verified that the stable equilibrium $\bm{\theta}_S^m$ 
and the saddles $\bm{T}_j^m$ are in the region $I$. 

We show that a trajectory $\bm{T}^{m}(t)=(\theta_1^m(t),\theta_2^m(t),\cdots,\theta_N^m(t))$ that starts in the region $I$ must remain in it. 
As in Eq.~(\ref{eq:additional_require}), equation $\sum_{i=1}^{N}\theta_{i}^{m}(t)=2\pi m$ always holds for $\bm{T}^{m}(t)$. If a component 
$\theta_{k}^m(t)$ goes through the upper bound 
with $\theta_{k}^m(t)\geq\pi-\hat{\theta}_1^m$, there must be at least one component 
crossing the lower bound since 
$\sum_{i=1}^{N}\theta_{i}^{m}(t)=2\pi m$. Otherwise, all the components satisfy $\theta_{i}^{m}(t)>\hat{\theta}_{1}^m, i\neq k$, which leads to
$\sum_{i}^{N}\theta_{i}^{m}(t)>2\pi m$, subsequently a contradiction results. So it is only 
needed to prove each component $\theta_{i}^m(t)$ can never go through the lower bound.

First, the component of $\bm{T}^{m}(t)$ can not go through the lower bound as a single component 
while two of its neighbors still remain in region $I$.
In fact, assume a component $\theta_{i}^{m}(t)$ is 
going through
its lower bound and its neighbors $\theta_{i+1}^{m}(t)$ and $\theta_{i-1}^{m}(t)$ are both in the range 
$(\hat{\theta}_1^m, \pi-\hat{\theta}_1^m)$. From Eq.~(\ref{eq:nonlinear_P0}), it yields

\begin{align}
 \dot{\theta}_{i}^m&=-2K\sin{\theta_{i}^m}+K\sin{\theta_{i+1}^m}+K\sin{\theta_{i-1}^m}\nonumber\\
 &>\,-2K\sin{\theta_{i}^{m}}
 +2K\sin{\hat{\theta}_{i}^{m}},
 \label{eq:lower_bound}
\end{align}
which is positive for $\theta_i^m(t)=\hat{\theta}_1^m$. Therefore $\theta_{i}^m$ will increase and thus never go beyond the lower bound.

Second, the components of $\bm{T}^{m}(t)$ can not go through the lower bound as a group of more than one neighboring component. 
Assume two components $\theta_{i}^{m}(t)$
and $\theta_{i+1}^{m}(t)$ are going through their lower bound and the neighbor $\theta_{i-1}^{m}(t)$ of $\theta_i^m(t)$
is in the range $(\hat{\theta}_1^m, \pi-\hat{\theta}_1^m)$, the inequality (\ref{eq:lower_bound})
for $\theta_{i}^{m}(t)$ must still be satisfied. 
It follows immediately $\theta_{i}^{m}(t)$ can not go through its lower bound. Similarly it holds for $\theta_{i+1}^{m}(t)$.

So the only possibility for the trajectory arriving the boundary of $I$ is all its components arrive together. However, only the 
type-1 saddles are on the boundary. Any trajectory starts from a point inside $I$ will always be 
inside $I$. So it has been proven that $I$ is an invariant set. Since $\bm{\theta}_S^m$ is the only stable equilibria in $I$, 
the unstable manifold of
$\bm{T}_j^{m}$ limits $\bm{\theta}_S^m$. In special, the unstable manifold of $\bm{T}_j^{0}(t)$ limits $\bm{\theta}_S^0$. 
In other words, the saddle $\bm{T}_j^{0}(t)$ is on the PEBS of $\bm{\theta}_S^0$.

% As within region $I$, the index $m$ cannot change, that is
% \begin{align}
% \sum_{i=1}^N\theta_i^m(t)&=2\pi m
% \label{eq:som}
% \end{align}
% it follows
% immediately by Eq.~(\ref{eq:som}) that if $\theta_i^m(t)> \hat{\theta}_i^{m+1}$ for some $t$ must always be accompanied by
% a $\theta_l^m(t)<\hat{\theta}_1^{m}$ for some $l{\neq}i$ , which we proved to be false. Hence, we have shown 
% that under the condition  $\theta_j\in[\hat{\theta}_j^m,\hat{\theta}_j^{m+1}]~~~(j\neq i)$  and $P=0$
% a trajectory can never leave region $I$.

This proof remains valid for sufficiently small values of $P/K$ as the dynamical system Eq.(\ref{eq:reduced}) is structurally 
stable\cite{dynamic_book}. 
Numerical simulations suggest that the result is  in fact true  for all $P/2K<1$.

\section{\large {Algorithm for finding all the type-j equilibria}}

As described in Theorem \ref{thm:stability}, if there is no phase difference with negative cosine value,
the related equilibria are stable. The solutions 
can be obtained directly by solving the following equations with different integer $m\in \left(-\frac{N}{4},\frac{N}{4}\right)$
  \begin{eqnarray}
  \hspace{-35pt}
  \sum_{i}^{N}\theta_{i}&=& 2m\pi, ~~ \theta_{i}\in [-\frac{\pi}{2},\frac{\pi}{2}],\label{eq:sum_theta}\\
  \hspace{-15pt} P_{i}-K\sin{\theta_{i}}+K\sin{\theta_{i+1}}&=&0, ~~~~i=1,\cdots,N-1. \nonumber
 \end{eqnarray}

In the following, we turn to the case with some phase differences having negative cosine values, i.e, phase differences exceeding 
$\pi/2~(\rm {mod}~2\pi)$

Assume the phase difference
$\theta_{i}$ to have a negative cosine value, the power flow  is $s_{i}=\sin{\theta_{i}}$. There must be a $\theta_{i}^{*}=\arcsin{s_{i}}$
such that $\theta_{i}=-\pi-\theta_{i}^{*}$ which can be plug in Eq.~(\ref{eq:sum_theta}) directly, the coefficient $\theta_{i}$ in equation should 
be changed by adding 
$\pi$ to the right hand side. Assume the first $i$ edges have negative cosine values. The problem is then reduced to  solving the equations
{\footnotesize
\begin{eqnarray}
-\theta_{1}-\cdots-\theta_{i}+\theta_{i+1}+\cdots+\theta_{N}=m\pi,  \theta_{i}\in [-\frac{\pi}{2},\frac{\pi}{2}].\label{eq:appendix_nonlinear1}\\
 P_{i}-K\sin{\theta_{i}}+K\sin{\theta_{i+1}}=0, ~~~~i=1,\cdots,N-1,
   \label{eq:appendix_nonlinear2}
\end{eqnarray}
}
where $m\in\left(-\frac{N}{2},\frac{N}{2}\right)$ is an even integer for even $i$ and is an odd integer for odd $i$.

Now let us focus on solving the nonlinear equations Eqs.-(\ref{eq:appendix_nonlinear1},\ref{eq:appendix_nonlinear2}).
Assuming that $\beta=\sin{\theta_{N}}$ is known, all the other variables can be
solved as follows
\begin{eqnarray}
\theta_{i}=\arcsin\Big(\frac{\sum_{j=1}^{i}P_{j}}{K}+\beta\Big),i=1,\cdots,N-1.
\label{eq:appendix_nonlinear3}
\end{eqnarray}
By substituting the values for $\theta_i$ from Eq.~(\ref{eq:appendix_nonlinear1}), we arrive at 
a one dimensional nonlinear equation for $\beta=\sin{\theta_{N}}$
\begin{eqnarray}
 \sum_{i=1}^{N}a_{i}\arcsin\Big(\frac{\sum_{j=1}^{i}P_{j}}{K}+\beta\Big)-m\pi=0.
 \label{eq:appendix_nonlinear4}
\end{eqnarray}
Where the coefficient $a_{i}=1$ if edge $i$ has phase difference $\theta_i$ with positive cosine value 
otherwise $a_{i}=-1$. All solutions to Eq.~(\ref{eq:appendix_nonlinear4})
can easily be found using the interval method described in \cite{moore}.

Herein, we denote the number of phase differences leading to  negative cosine values by $N_{n}$. Following the result on the spectral 
of weighted graph in Ref. \cite{bronski}, to find all the type-$j$ equilibria, it is only needed to solve the equilibria with $N_{n}=j-1,j,j+1$. 

We construct the following algorithm to find all the type-$j$ equilibria of the cyclic power grids.

% \begin{algorithm*}[H]
% \SetAlgoLined
% \DontPrintSemicolon
% \For{$N_{n}=j-1,j,j+1$}
%  {  Collect all the combinations of $N_{n}$ lines whose phase difference has negative cosine value
%     and determine the coefficient $a_{i}$ in (\ref{eq:appendix_nonlinear4}), store the combination and $a_{i}$ in the set $\mathcal{C}$\;
%     \eIf{$N_{n}$ is odd integer}{
%       select odd numbers of $m$ in $(-\frac{N}{2},\frac{N}{2})$ and store in the set $\mathcal{S}_{m}$\;
%       }
%       { select even numbers of $m$ in $(-\frac{N}{2},\frac{N}{2})$ and store in the set $\mathcal{S}_{m}$\;
%       }
%     \For{The combinations in $\mathcal{C}$}{
% 
%     \For{The integer in $\mathcal{S}_{m}$}
%     { Solve one dimensional equation (\ref{eq:appendix_nonlinear4}) by interval method;\;
%       \If{The solution $x$ exist}{
%       Solve all the rotor angle difference $\theta_{i}$ by Eq.(\ref{eq:appendix_nonlinear3}). The one with negtive weight is $\pi-\theta_{i}$\;
%       Adjust $\theta_{N}$ to make sure $\sum_{i=1}^{N}\theta_{i}=0$;\;
%       Save the equilibria;\;
%       }}
%     }
%  }
%  \caption{Finding all the type-$j$ equilibria of cyclic power grids}
% \end{algorithm*}
% 
\begin{algorithm}{ \emph {Finding all the type-$j$ equilibria of cyclic power grids}}
\item
\begin{algorithmic}[1]
\algsetup{indent=1em}
\FOR{$N_{n}=j-1,j,j+1$}
\STATE Collect all the combinations with $N_{n}$ lines whose phase difference have negative cosine values and determine
the coefficient $a_{i}$ in Eq.~(\ref{eq:appendix_nonlinear4}). Store all the combinations and corresponding $a_{i}$ in the set $\mathcal{C}$.
\IF {$N_{n}$ is an odd integer}
\STATE Collect odd numbers of $m$ in $(-\frac{N}{2},\frac{N}{2})$ and store in the set $\mathcal{S}_{m}$,
\ELSE 
\STATE Collect even numbers of $m$ in $(-\frac{N}{2},\frac{N}{2})$ and store in the set $\mathcal{S}_{m}$. 
\ENDIF
\FOR{all the combinations in $\mathcal{C}$}
\FOR{all $m$ in $\mathcal{S}_{m}$}
\STATE Solve the one dimensional Eq.~(\ref{eq:appendix_nonlinear4}) by interval method;
\IF{the solution $\beta$ exists}
\STATE Calculate all the phase difference $\theta_{i}$ by Eq.(\ref{eq:appendix_nonlinear3}). The ones with negative cosine 
values are $\pi-\theta_{i}$.
\STATE Adjust $\theta_{N}$ such that $\sum_{i=1}^{N}\theta_{i}=0$.
\IF{the equilibrum is type-j}
\STATE Save the equilibrium.
\ENDIF
\ENDIF
\ENDFOR
\ENDFOR
\ENDFOR
\end{algorithmic}
\end{algorithm}

% (1) For all the value of the number of links with negtive weight $N_{n}=0,\cdots,N$, enter the loop:
%
% \quad(1.1) Select a combination of links with $N_{n}$ negtive weight and determine the coefficient $a_{i}$ in \ref{eq:appendix_nonlinear4}.
%
% \quad(1.2) If $N_{n}$ is an odd integer, select odd numbers for $m$ in $(\frac{N}{2},\frac{N}{2})$; else, select even
% numbers for $k$ in $(\frac{N}{2},\frac{N}{2})$. For all the $m$, enter the loop:
%
% \qquad(1.2.1) Solve one dimensional equation \ref{eq:appendix_nonlinear4}. by interval method.
%
% \qquad(1.2.2) Solve all the rotor angle difference $\theta_{i}$ by equation \ref{eq:appendix_nonlinear3},
% for the one with negtive weight set $\theta_{i}=\pi-\theta_{i}$,
%
% \qquad(1.2.3) save the equilibria.

Note that
the computation complexity for finding all the equilibria will definitely increase at least exponentially with the number of nodes $N$.
However, on finding all
the type-1 equilibria, we have $N_{n}\leq 2$. The computational
cost on solving all the type-1 equilibria is therefore $O(N^{3})$. We remark that 
this algorithm can be extended to networks without so many cycles.  

}

\small{

}
%\end{multicols}

\end{document}